\documentclass[journal]{IEEEtran}
  \usepackage{amsfonts}
  \usepackage{mathrsfs}
  \usepackage{amssymb,amsmath}
  \usepackage[compress]{cite}
  \usepackage{bm}
  \usepackage{algorithm}
  \usepackage{algorithmic}
  \usepackage{epsfig,graphics,subfigure}
  \usepackage{color}
  \usepackage{multirow}
  \usepackage{graphicx}
  \usepackage{amsthm}
  \usepackage{array}
  \usepackage{tabularx}
  \usepackage{multicol}
  \usepackage{booktabs}
  \usepackage{cite}
  \usepackage{stfloats}
  \usepackage{balance}
    \usepackage{epstopdf}

\newtheorem{lemma}{Lemma}

\newtheorem{proposition}{Theorem}

\begin{document}

%\title{Joint Design of Beamforming and Time Splitting for SDMA-enabled and TDMA-enabled MIMO Partically Wireless Powered Sensor Networks }
\title{Optimal Beamforming and Time Allocation for Partially Wireless Powered Sensor Networks with Downlink SWIPT }
\author{Shiqi Gong, Shaodan Ma, Chengwen Xing and Guanghua Yang
%
%\thanks{ S. Gong and C. Xing are with School of Information and Electronics, Beijing Institute of Technologies,
%Beijing 100081, China.}
%\thanks{Shaodan Ma is with Department of Electrical and Computer Engineering, University of Macau, Macao SAR, China(e-mail: shaodanma@umac.mo).}
}

\maketitle

\begin{abstract}
  Wireless powered sensor networks (WPSNs) have emerged as a key development towards the future self-sustainable Internet of Things (IoT) networks. To achieve a good balance between self-sustainability and reliability, partially WPSNs with a mixed power solution are desirable for practical applications. Specifically, most of the sensor nodes are wireless powered but the key sensor node adopts traditional wire/battery power for reliability. As a result, this paper mainly investigates optimal design for the partially WPSNs in which simultaneous wireless information and power transfer (SWIPT) is adopted in the downlink. Two scenarios with space division multiple access (SDMA) and  time division multiple access (TDMA) in the uplink are considered. For both the SDMA-enabled and TDMA-enabled partially WPSNs, joint design of downlink beamforming, uplink beamforming and time allocation is investigated to maximize the uplink sum rate while guaranteeing the quality-of-service (i.e., satisfying the downlink rate constraint) at the key sensor node. After analyzing the feasibility of uplink sum rate maximization problems and the influence of the downlink rate constraint, semi-closed-form optimal solutions for both SDMA-enabled and TDMA-enabled WPSNs are proposed with guaranteed global optimality. Complexity analysis is also provided to justify the advantage of the proposed solutions in low complexity. The effectiveness and optimality of the proposed optimal solutions are finally demonstrated by simulations.
 %
%  the uplink sum rate maximization of  multi-user multiple-input multiple-output (MIMO) WPSNs, in which the downlink simultaneous  wireless  information and energy transfer are considered, and then  the  harvest-then-transmit  strategy is
%  adopted for  both   space-division-multiple-access (SDMA) enabled  and time-division-multiple-access (TDMA) enabled uplink transmission modes. Specifically, for both SDMA-enabled and TDMA-enabled WPSNs, we   jointly design  the time splitting, the energy beamforming and  the information beamforming
%to maximize  uplink sum rate subject to the downlink data rate constraint of the information receiver (IR).  Fortunately, both the two non-convex uplink sum rate optimization  problems can be reformulated as   concave  ones  via the  variables replacement. Then we  can explore  the  semi-closed  optimal solutions correspondingly. Moreover,     an  iteration-based   optimization procedure is proposed to obtain the globally optimal solutions for each concave problem. Numerical experiments are
% conducted to verify the excellent  uplink sum rate performance of the  proposed iterative optimization for both  WPSNs.
\end{abstract}

\begin{IEEEkeywords}
WPSN, SWIPT, SDMA-enabled, TDMA-enabled, uplink sum rate.
\end{IEEEkeywords}
\section{Introduction}
As a new paradigm in communications, internet of things (IoT) can provide intelligent control and smart solutions for various tasks in our lives by connecting a large variety of devices \cite{R2,R0,R1}. It has been gradually applied in various applications from smart homes, healthcare to structural and environmental monitoring, and disaster warning and so on \cite{WSN0,WSN1,WSN2,WSN3,WSN4}. Usually IoT networks involve a large number of sensor nodes to collect and exchange information with data center and can also be regarded as wireless sensor networks (WSNs). The success of IoT networks heavily relies on the reliability and sustainability of the sensor nodes. When a large number of sensors are deployed, power supply for the sensor nodes becomes a challenging issue to be solved. Currently, there are several available solutions to power up the sensors. The first one is to wire the sensor to a fixed power supply through cables. The installation is time consuming and location dependent, and the wire connection also limits the mobility of sensors. The second one is to power sensors by batteries. However, batteries usually have short lifetime and their maintenance and replacement are costly and difficult, especially when sensors are deployed in harsh environment or remote locations. It is even impossible when the sensors  are deployed  inside the  building  structures or human bodies\cite{WSN0,WSN1}. The third one is to self-harvest energy from natural energy sources, such as solar and wind. But the amount of harvested energy is unstable and affected by uncontrollable nature factors. Lately, a new solution ``wireless power transfer (WPT)" was proposed \cite{WSN2, WSN3,WSN4}. It leverages the fact that energy could be transferred wirelessly through radio frequency (RF) signals. Compared to other natural based energy harvesting, the RF oriented energy harvesting is generally ubiquitous, predictable and steady  with  low cost\cite{R4, RR4}. As reported in \cite{R4}, the energy harvesters operating at 915MHz and using Dipole antennas can collect about 3.7mW and 1uW  of wireless power from  RF signals at  distances of 0.6m  and 6m, respectively. Meanwhile, advanced antenna and transceiver designs for realizing high RF energy harvesting efficiency have also been reported \cite{RR4}. With high feasibility and a wide range of applications in IoT, wireless powered sensor networks (WPSNs) have thus gained considerable research interest recently.

Generally in WPSNs, wireless powered sensors firstly harvest energy from the downlink RF signal transmitted by a power source or a hybrid access point (H-AP) which serves dually as a power source and a data center, and then utilize  the harvested  energy  for   uplink information  transmission \cite{R5, R6,RWPCN2}. With multiple sensors, the uplink transmission can be supported following spatial division multiple access (SDMA) or time division multiple access (TDMA) schemes. To achieve various objectives, optimal designs for SDMA-enabled and TDMA-enabled WPSNs are necessary and have been investigated in the literature \cite{ RWPCN2, RWPCN3,RWPCN30,RWPCN00,RWPCN0, RWPCN1, RWPCN5}. Specifically, for SDMA-enabled WPSNs, the optimal  H-AP
beamforming  was  proposed for maximizing the uplink sum rate and maximizing the minimum uplink rate among multiple wireless powered sensors, respectively in \cite{RWPCN2} and \cite{RWPCN3}. In \cite{RWPCN30}, uplink sum throughput maximization under various cooperation protocols in SDMA-enabled cognitive WPSNs was studied. Moreover, fairness-based  uplink  throughput  maximization was discussed for multiple-input-multiple-output (MIMO) WPSNs in \cite{RWPCN00}. With respect to the TDMA-enabled WPSNs, \cite{RWPCN0, RWPCN1} studied the optimal time allocation for multiple energy harvesters to maximize
  uplink sum throughput  for  single-input-single-ouput (SISO) WPSNs. When multiple-input-single-output (MISO) TDMA-enabled WPSNs were considered, joint beamforming design and time allocation for uplink  sum throughput maximization were investigated in \cite{RWPCN5}. It was found that the downlink  energy beamforming design for the H-AP was similar to that in the SDMA-enabled WPSNs. For TDMA-enabled WPSNs with separate multiple-antenna power source and single-antenna data center, sum throughput maximization through optimal beamforming and time allocation was investigated in \cite{WSN3}. Optimal solutions were proposed for two different scenarios, i.e.,
 the power source and the sensor nodes belong to the same or different service operator(s).

All the aforementioned optimal designs consider fully wireless powered sensor networks, in which all the sensor nodes are wireless powered and the downlink is dedicated for wireless power transfer only\footnote{If the H-AP has information to be transmitted to the sensor nodes, the information can be transmitted in another dedicated downlink phase.}. However in practice, mixed power solution may be adopted to achieve enhanced reliability in WSNs. Particularly, most of the sensor nodes are wireless powered, but the key sensor node is powered up by traditional battery/wire power for reliable communications. This network can be regarded as a partially wireless powered sensor network. In this partially WPSN, dual functions of RF signals in wireless information and power transfer can be exploited simultaneously in the downlink to further improve the spectral efficiency. In other words, the downlink RF signals can carry not only energy to wireless powered sensors but also information to the key sensor node with traditional power supply, which results in simultaneous wireless information and power transfer (SWIPT) in the downlink\cite{R40,R41}. This partially WPSN with downlink SWIPT can achieve efficient communications with a simple mixed power supply solution and thus is attractive for practical applications\cite{RWPCN6}. Nevertheless, optimal design for such WPSNs with downlink SWIPT is rarely investigated due to the difficulty in coupled downlink and uplink design as well as mixed power and information transfer. As far as we know, there is only one available optimal design for WPSNs with downlink SWIPT reported in \cite{RWPCN6}. It considered a SDMA-enabled WPSN with multiple users. Multiple downlink SWIPT phases were introduced to sequentially transmit information to one sensor while power up the others and equal time duration was assumed for all the downlink and uplink phases. Joint design of downlink beamforming and uplink power allocation were investigated to maximize the minimum downlink and uplink signal-to-interference-and-noise ratios (SINRs).

Here we take a step forward to investigate joint design of beamforming and time allocation for partially WPSNs with downlink SWIPT. To guarantee quality of service (QoS) at the key sensor node with traditional power supply, downlink rate constraint is taken into account and uplink sum rate maximization under such downlink rate constraint is mainly concerned. The implementation of the downlink SWIPT as well as the consideration of downlink rate constraint makes the optimal design challenging and also differentiates our design from the prior ones \cite{RWPCN2,RWPCN3,RWPCN30,RWPCN00,RWPCN0,RWPCN1, RWPCN5,RWPCN6}. Both SDMA and TDMA schemes are considered for the uplink transmission. Notice that extra time allocation in the uplink is involved in the TDMA-enabled WPSNs with downlink SWIPT and its optimal design hasn't been discussed before. The formulated uplink sum rate maximization problems for both SDMA-enabled and TDMA-enabled partially WPSNs are originally non-convex with coupled optimization variables. But they can be reformulated as concave problems through certain transformation. After analyzing the feasibility of uplink sum rate maximization problems and the influence of the downlink rate constraint, semi-closed-form optimal solutions for both SDMA-enabled and TDMA-enabled WPSNs are proposed with guaranteed global optimality. Complexity analysis is also provided to justify the advantage of our proposed solutions in low complexity. The effectiveness and optimality of our proposed optimal solutions are finally demonstrated by simulations.

The rest of the paper is organized as follows. The partially WPSNs are introduced and the optimal design problems are formulated in Section II. Feasibility analysis and problem reformulation are given in Section III. Semi-closed-form optimal solutions for SDMA-enabled and TDMA-enabled WPSNs are derived and meaningful insights in the optimal solutions are pointed out in Section IV. Simulation results are presented in Section V and finally conclusions are drawn in Section VI.

\noindent {\textit{\textbf{Notation}}}:
Throughout this paper, bold-faced lowercase and uppercase letters stand for vectors and matrices, respectively. The symbols ${\bm{A}}^T$, ${\bm{A}}^{\ast}$, ${\bm{A}}^H$, ${\bm{A}}^{-1}$ and $\text{Tr}(\bm{A})$ denote transpose, conjugate, Hermitian, inverse and trace of matrix $\bm{A}$, respectively. In addition, $\text{vec}(\cdot)$ denotes the vector formed by stacking the columns of a  matrix, while
$\text{diag}[\cdot]$  represents
a % vector consisting of the main diagonal elements of a matrix or
 diagonal matrix constructed from a vector. $\bm{I}_{d}$ denotes a $d$-dimensional identity matrix. All eigenvalues and singular values in our work  are arranged in a decreasing order.
  Finally, $a \rightarrow b$ indicates  $a$ approaches $b$  and  $(a)^{+}=\max(a, 0)$.

 \section{system model and problem formulation  }
 As shown in Fig. 1,  we consider a partially WPSN with $K$ wireless powered sensor nodes, also called energy harvesters (ERs), $E_k,  \forall k \!\!\in \!\!\mathcal{K}, \mathcal{K}\!\!=\!\!\{1,\cdots,K\}$, one battery/wire powered sensor node, also called  information receiver  (IR), and one hybrid access point (H-AP). The H-AP serves as not only power source to power up the sensor nodes ERs but also the data center to communicate with all sensor nodes. The H-AP and ERs are equipped with $N_B$ and $N_U$ antennas for effective wireless power transfer and harvesting, respectively, while the single-antenna IR is considered. This mixedly powered sensor network leverages wireless power transfer technique to solve for the challenging power supply problem for the majority of the sensor nodes, while adopts traditional battery/wire power only at the key sensor node to guarantee its communications. It can achieve efficient communications with simple power supply solution and has a wide range of practical applications.

In the partially WPSNs, two different phases are involved for power transfer and communications. In the first downlink phase, the H-AP transfers power to the ERs and sends information to the IR simultaneously. In other words, SWIPT is conducted in the downlink. Then in the second uplink phase, all the sensor nodes transmit sensing data to the H-AP. Since the ERs do not have a fixed power supply, their data transmissions are only powered by the harvested energy in the first downlink phase.

Defining the total time duration for the two phases as a unit, we assume the first $\tau_0$ $(0\le\tau_0\le 1)$ slot is utilized for downlink transmission and the remaining $(1-\tau_0)$ slot is allocated for uplink transmission. In the uplink transmission, two multiple access schemes, i.e., SDMA and TDMA, are considered. Specifically, for the SDMA-enabled WPSN, the IR and all ERs simultaneously transmit information to the H-AP in the $(1-\tau_0)$ slot, while for the TDMA-enabled WPSN, each ER and the IR sequentially transmit information to the H-AP in the $\tau_{E_k}, \forall k \!\!\in \!\!\mathcal{K}$  and $\tau_{I_R}$ slots, respectively, with $\sum\limits_{k=1}^{K}\tau_{E_k}+\tau_{I_R}
\!=\!1\!-\!\tau_0 $. In general, SDMA outperforms TDMA in terms of uplink sum rate. However, TDMA is easy to implement with low signal detection complexity at the receiver. Both of them are widely adopted in wireless communication systems \cite{ RWPCN5,RWPCN6}.
\begin{figure}[t]
\centering
\includegraphics[width=3.2in]{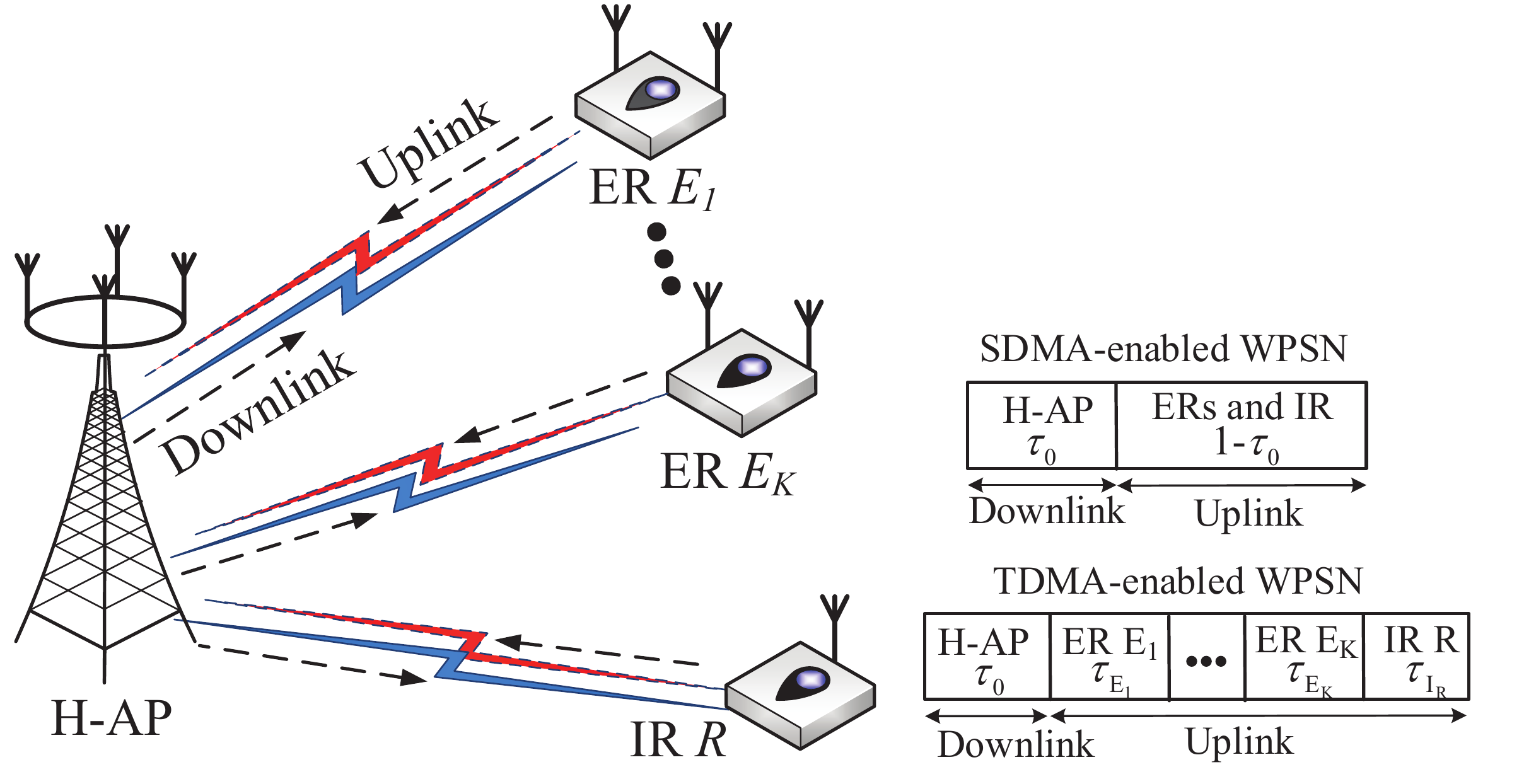}
\caption{SDMA-enabled and  TDMA-enabled MIMO WPSNs.}
\label{figure0}
\end{figure}

\subsection{SDMA-enabled WPSN}
In the first downlink phase, the H-AP adopts the SWIPT technique to transmit energy and information to  $K$ ERs and the IR simultaneously. The energy-carrying information signal is denoted as $\bm{s}_{B}\in \mathbb{C}^{N_B} $ with covariance matrix $\bm{W}_B=\mathbb{E}[\bm{s}_{B}\bm{s}_{B}^H]  \!\in\!\mathbb{C}^{N_B \times N_B}$. The transmission power can be written as $\text{tr}(\bm{W}_B)$  and should usually satisfy the power constraint $\text{tr}(\bm{W}_B) \le P_B$, where $P_B$ is the maximum allowable transmission power. Denoting the downlink channels from the H-AP
  to the IR   and  ER $E_k$ as
 $\bm{h}_{I_R}\in\mathbb{C}^{ N_B}$ and $\bm{H}_{E_k}\in\mathbb{C}^{N_U \times N_B}, \forall k \!\!\in \!\!\mathcal{K}$, respectively, the received signals at $K$ ERs and IR can be respectively written as
\begin{align}
 \bm{y}_{E_k} & =\bm{H}_{E_k}\bm{s}_{B}+\bm{n}_{E_k} , \forall k \!\in \!\mathcal{K} \label{eq1}\\
{y}_{I_R}&=\bm{h}_{I_R}^H
\bm{s}_B+{n}_{I_R}\label{eq2}
 \end{align} where $\bm{n}_{E_k}\sim\mathcal{CN}(\bm{0}, \sigma_n^2\bm{I}_{N_U})$ and ${n}_{I_R}\sim\mathcal{CN}({0}, \sigma_n^2)$ are the additive white Gaussian noises (AWGNs) at the ER $E_k, \forall k \!\!\in \!\!\mathcal{K}$ and IR, respectively. According to \eqref{eq1}, the   harvested energy at ER $E_k$ in this $\tau_0$ slot can be expressed as\vspace{-1mm}
 \begin{align}\label{eq02}
   Q_{E_k}=\tau_0\varepsilon_k\text{tr}(\bm{H}_{E_k}
 \bm{W}_B
 \bm{H}_{E_k}^H),~\forall k \!\in \!\mathcal{K}
 \end{align} where $0\le \varepsilon_k \le 1$  denotes  the energy  harvesting efficiency  of ER $E_k, ~\forall k \!\in \!\mathcal{K}$.
% For simplicity, we consider $\varepsilon_k = 1$ in our work.
Meanwhile, based on \eqref{eq2}, the achievable downlink rate of the IR is expressed  as
\begin{align}
 R_{I_R}^D & \!=\!\tau_0\log(1\!+
 \!\sigma_n^{-2}\bm{h}_{I_R}^H\bm{W}_B
 \bm{h}_{I_R}).
\label{eq3}
 \end{align}

In the second uplink stage, the IR and ER $E_k$ simultaneously  transmit the information signals ${s}_{R}$ with $\mathbb{E}[\vert{s}_{R}\vert^2]\!\!=\!\!1$ and $\bm{x}_{E_k}\!\!\in\!\!\mathbb{C}^{N_U}$ with covariance matrix   $\bm{P}_{E_k}\!\!=\!\!\mathbb{E}[\bm{x}_{E_k}\bm{x}_{E_k}^H]\!\!\in\!\mathbb{C}^{N_U \times N_U} $ to the H-AP, respectively. The transmission power at the IR is fixed as $P_I$. Since the energy at the ER $E_k$ is only coming from energy harvesting in the first phase, the transmission energy at the ER $E_k$ should not exceed the harvested energy $Q_{E_k}$, i.e., $(1-\tau_0)\text{tr}(\bm{P}_{E_k})\!\le \! \tau_0\varepsilon_k\text{tr}(\bm{H}_{E_k}\bm{W}_B\bm{H}_{E_k}^H),~\forall k \!\in \!\mathcal{K}$. By denoting $\bm{g}_{I_R}\!\in\!\mathbb{C}^{ N_B}$  and $\bm{G}_{E_k}\!\in\!\mathbb{C}^{N_B \times N_U}$ as the uplink channels from the IR  and ER $E_k, \forall k \!\!\in \!\!\mathcal{K}$ to  the H-AP, respectively, we have the received signal at the H-AP as
 \begin{align}\label{eq4}
   \bm{y}_{B}=
 \sum\limits_{k=1}^K\bm{G}_{E_k}
 \bm{x}_{E_k}+ \bm{g}_{I_R}\sqrt{{P}_{I}}
 {s}_{R}+\bm{n}_{B},
 \end{align}
where $\bm{n}_{B}\sim \mathcal{CN}(\bm{0}, \sigma_n^2\bm{I}_{N_B})$ denotes the received AWGN noise at the H-AP. Similarly to \cite{succ},  we assume that successive interference cancellation technique  is  adopted at the  H-AP, and  thus the achievable uplink sum rate of the SDMA-enabled WPSN can be formulated as
 \begin{align}\label{eq5}
 R_{S}^{U}\!&=\!
 (1\!\!-\!\!\tau_0)\log\det\big(\widetilde{\bm{I}}_R\!+\! \sigma_n^{-2} \sum\limits_{k=1}^K \bm{G}_{E_k}
   \bm{P}_{E_k}\bm{G}_{E_k}^H\big)
 \end{align}  where $\widetilde{\bm{I}}_R=\bm{I}_{N_B}\!+\! \sigma_n^{-2} P_I\bm{g}_{I_R} \bm{g}_{I_R}^H$. In this paper, we aim at maximizing the uplink sum rate $R_{S}^{U}$ in \eqref{eq5} while guaranteeing the downlink communication quality-of-service (QoS), i.e., the downlink information rate $R_{I_R}^D$ in \eqref{eq3}, by jointly optimizing
 the  time splitting ratio $\tau_0$, downlink energy beamforming $\bm{W}_B$  and
 uplink information beamforming $\bm{P}_{E_k}, \forall k \!\!\in \!\!\mathcal{K}$.
Mathematically, the uplink sum rate maximization  problem in the SDMA-enabled WPSN is formulated as \vspace{-3mm}
 \begin{align}\label{eq6}
 &\underset{\substack{{\tau}_0,\bm{W}_B\succeq\bm{0},\\
 \bm{P}_{E_k}\succeq\bm{0}, \forall k} }
{\text{max}}~(1\!\!-\!\!\tau_0)\log\det\big(\widetilde{\bm{I}}_{R}\!+\! \sigma_n^{-2} \!\sum\limits_{k=1}^K \bm{G}_{E_k}
   \bm{P}_{E_k}\bm{G}_{E_k}^H\big)\nonumber\\
&{\rm{s.t.}}~\text{CR1:}~\text{tr}(\bm{W}_B) \le P_B,~0\le \tau_0\le 1, \nonumber\\
&~~~~~\text{CR2:}~ \tau_0\log(1\!+
 \!\sigma_n^{-2}\bm{h}_{I_R}^H\bm{W}_B
 \bm{h}_{I_R})\ge R_{I}, \nonumber\\
 &~~~~~\text{CR3:}~(1\!\!-\!\!\tau_0)\text{tr}(\bm{P}_{E_k})\!
 \!\le \!\!\tau_0 \varepsilon_k\text{tr}(\bm{H}_{E_k}\bm{W}_B\bm{H}_{E_k}^H),\forall k .
\end{align}
Here, the constraint $\text{CR1}$ corresponds to the maximum transmission power constraint at the H-AP, $\text{CR2}$ models the downlink QoS constraint at the IR node where $R_I$ denotes the required minimum downlink rate, and $\text{CR3}$ considers the uplink transmission energy constraints at the ERs since their uplink energies are only coming from the harvested energies $Q_{E_k}$ \eqref{eq02} in the downlink.
Notice that our uplink sum rate maximization is different from those for fully  WPSNs in \cite{RWPCN2,RWPCN3,RWPCN30,RWPCN00,RWPCN0,RWPCN1, RWPCN5}, since both energy and information transfers are conducted simultaneously in the downlink and additional downlink rate constraint $\text{CR2}$ is considered to guarantee the QoS for the information transfer to the IR. Moreover, our rate maximization problem also differs from that in \cite{RWPCN6} with additional uplink energy constraints $\text{CR3}$. Clearly, the downlink rate  constraint and the uplink energy constraints are practical and necessary to be considered in  partially  WPSNs. However, their consideration complicates the optimization problem with highly coupled variables $\{{\tau}_0,\bm{W}_B,\bm{P}_{E_k}, \forall k \!\!\in \!\!\mathcal{K}\}$, and  the problem becomes non-convex and difficult to solve.
%
% Clearly, our consider problem \eqref{eq6}  is more comprehensive and challenging than
% most existing literature studied.  Unfortunately,  the  problem \eqref{eq6} is nonconvex and difficult to solve directly with the
%  coupled optimization
%  variables $\{{\tau}_0,\bm{W}_B,\bm{P}_{E_k}, \forall k \!\!\in \!\!\mathcal{K}\}$. To solve it effectively,    we   discuss the feasibility and optimality  of  problem \eqref{eq6} in Section~\ref{sectionIII}.
 \subsection{ TDMA-enabled WPSN }
For the TDMA-enabled WPSN,  the received signals  at the ER $E_k$ and the IR in the downlink phase are the same as that in \eqref{eq1} and \eqref{eq2}, respectively. However, in  the  uplink phase,
 the IR and each ER $E_k$ sequentially transmit signals to the H-AP within the slots of $\tau_{I_R}$ and $\tau_{E_k}$, respectively. Therefore,
the achievable uplink sum rate  is  given by\vspace{-1mm}
\begin{align}\label{eq10}
 R_{T}^U&= \sum\limits_{k=1}^K \tau_{E_k}\log\det(\bm{I}_{N_B}+ { \sigma_n^{-2}} \bm{G}_{E_k}
   {\bm{P}}_{E_k}\bm{G}_{E_k}^H)\nonumber\\
  & +\tau_{I_R}\log(1+ { \sigma_n^{-2}} P_I \bm{g}_{I_R}
   \bm{g}_{I_R}^H
   )
\end{align}
Accordingly, the uplink sum rate maximization
problem in the TDMA-enabled WPSN is formulated as
\begin{align}\label{eq12}
 &\underset{\substack{\bm{\tau},\bm{W}_B\succeq\bm{0},
 \bm{P}_{E_k}\succeq\bm{0}}}
{\text{max}}~\sum\limits_{k=1}^K \tau_{E_k}\log\det(\bm{I}_{N_B}+ { \sigma_n^{-2}} \bm{G}_{E_k}
   {\bm{P}}_{E_k}\bm{G}_{E_k}^H)\nonumber\\
  &~~~~~~~~~~~~ +\tau_{I_R}\log\det(1+ { \sigma_n^{-2}} P_I \bm{g}_{I_R}
   \bm{g}_{I_R}^H
   )\nonumber\\
%&{\rm{s.t.}} \text{CR1}, \text{CR2}\\
&{\rm{s.t.}} ~\text{CR1:}~\text{tr}(\bm{W}_B) \le P_B,~0\le \tau_0\le 1, \nonumber\\
&~~~~~\text{CR2:}~ \tau_0\log(1\!+
 \!\sigma_n^{-2}\bm{h}_{I_R}^H\bm{W}_B
 \bm{h}_{I_R})\ge R_{I}, \nonumber\\
&~~~~~\text{CR4:}~\tau_{E_k}
\text{tr}(\bm{P}_{E_k})\!\!
 \le \!\!\tau_0 \varepsilon_k\text{tr}(\bm{H}_{E_k}\!\bm{W}_B\!
 \bm{H}_{E_k}^H),\forall k,\nonumber\\
 &~~~~~\text{CR5:}~\tau_{I_R} + \sum\limits_{k=1}^{K}\tau_{E_k}=1-\tau_0
\end{align}
where $\bm{\tau}\!\!=\!\![\tau_0,\tau_{E_1},\cdots,\tau_{E_K},\tau_{I_R}]$. Similarly to $\text{CR3}$ in \eqref{eq6}, the constraint $\text{CR4}$ models the uplink transmission energy constraints but with the uplink slot for the ER $E_k$ adjusted as $\tau_{E_k}$. Additionally, the constraint $\text{CR5}$ is introduced due to the implementation of TDMA protocol.
Clearly, the  problem \eqref{eq12} is also nonconvex  and is more challenging than the
problem \eqref{eq6}
since more time slots are involved in the vector $\bm{\tau}$ for optimization. Obviously, the key challenge in the problems \eqref{eq6} and \eqref{eq12} comes from the downlink rate constraint $\text{CR2}$, its feasibility will be discussed first before we proceed to solve for the optimization problems.

\section{ Feasibility Analysis }\label{sectionIII}
\subsection{Feasibility of  downlink rate constraint}
Different from most of the prior designs for WPSNs, downlink rate constraint $\text{CR2}$ is considered here due to the adoption of simultaneous wireless power and information transfer in the downlink. Clearly from \eqref{eq6}
and \eqref{eq12}, the feasible downlink rate threshold $R_I$ is limited by the achievable  downlink rate $R_{I_R}^D $ in \eqref{eq3} which depends on the downlink beamforming $\bm{W}_B$, while $\bm{W}_B$ is also constrained by the maximum downlink transmission power in $\text{CR1}$. The feasible downlink rate threshold is thus upper bounded by
%
%In this subsection,
% we  study  the feasibility of  problems \eqref{eq6}
%and  \eqref{eq12}, respectively.
%  In fact,  the feasibility of both  two problems is mainly dominated by the downlink rate threshold $R_I$ required by IR.  Therefore, we aim at deriving  the reasonable value range of $R_I$ for problems \eqref{eq6} and  \eqref{eq12} in the following. Firstly, the upper bound of  $R_I$  is obtained by solving the   following  achievable downlink rate maximization problem\vspace{-1mm}
\begin{align}\label{eq20}
&\underset{\tau_0,\bm{W}_B\succeq\bm{0}}
{\text{max}}~R_I\nonumber\\
 &{\rm{s.t.}}~ \text{CR1:}~~\text{tr}(\bm{W}_B) \le P_B,~0\le \tau_0\le 1, \nonumber\\
&~~~~~\text{CR2:}~ \tau_0\log(1\!+
 \!\sigma_n^{-2}\bm{h}_{I_R}^H\bm{W}_B
 \bm{h}_{I_R})\ge R_{I}.
\end{align}
It is easily observed that the optimal $\tau_0$ for the problem \eqref{eq20} is $\tau_0\!=\!1$, and then the problem \eqref{eq20} reduces to a conventional
rate maximization problem for a MISO system and has been
solved in \cite{RR2}. More specifically,  the
optimal solution of $\bm{W}_B$ for the  problem \eqref{eq20} is $\bm{W}_B^{up}=\frac{P_B}
{\Vert\bm{h}_{I_R}\Vert^2}\bm{h}_{I_R}\bm{h}_{I_R}^H$, which means that the downlink beamforming is aligned for information transmission only. The upper bound of feasible downlink rate threshold is correspondingly given as
 $R_I^{up}=\log(1\!+\!\sigma_n^{-2}P_B\Vert
\bm{h}_{I_R}\Vert^2)$ \cite{RR2}. Since both  SDMA-enabled and TDMA-enabled WPSNs have the same downlink process, this upper bound $R_I^{up}$ is applicable for both problems \eqref{eq6} and \eqref{eq12}. In other words, we can conclude that when $R_I \in [ 0, R_I^{up}]$,  the  problems \eqref{eq6} and  \eqref{eq12} are feasible.

\subsection{Tightness of downlink rate constraint}
In the feasible region $R_I\in [0, R_I^{up}]$, the tightness of downlink rate constraint $\text{CR2}$ depends on the actual rate threshold $R_I$ and will heavily affect the optimal solutions for the problems \eqref{eq6} and  \eqref{eq12}. Now we take the problem \eqref{eq6} for tightness investigation first. Specifically, if neglecting  the
downlink rate  constraint ${\text{CR}}2$, the problem
\eqref{eq6} reduces to the
sum  rate maximization problem for fully WPSNs in \cite{RWPCN30},
 and   the corresponding
optimal downlink beamforming is  given as ${\bm{W}}_B^{mi}\!\!=\!\!P_B\bm{u}_B^{mi}(\bm{u}_B^{mi})^H$, where $\bm{u}_B^{mi}\!\!\in\! \!\mathbb{C}^{N_B}$  is  the dominated eigenvector of a certain
 linear combination of  the covariance matrices for all ERs' downlink channels, i.e., $\bm{H}_{E_k}\bm{H}_{E_k}^H, \forall k \!\!\in \!\!\mathcal{K}$  \cite{RWPCN30}. This solution means that the downlink beamforming is aligned for energy transfer only.

 Given this downlink beamforming, the achievable downlink rate can be expressed as $R_I^{mi}\!\!=\!\!\log(1\!\!+\!\!\sigma_n^{-2}
\bm{h}_{I_R}^H{\bm{W}}_B^{mi}\bm{h}_{I_R})$, and the corresponding optimal achievable uplink sum rate is in fact an upper bound of that of the problem \eqref{eq6} with the constraint ${\text{CR}}2$. If the downlink rate threshold $R_I$ is less than the rate $R_I^{mi}$, i.e., $R_I < R_I^{mi}$, the downlink rate constraint ${\text{CR}}2$ can be automatically satisfied with the inequality strictly holding, i.e., $\tau_0\log(1\!+
 \!\sigma_n^{-2}\bm{h}_{I_R}^H\bm{W}_B
 \bm{h}_{I_R}) > R_{I}$. In other words, when $R_I \in [0, R_I^{mi})$, the constraint ${\text{CR}}2$ is inactive and can be ignored in the problem \eqref{eq6}. However, when $R_I^{mi}\le R_I\le R_I^{up}$, the downlink rate constraint cannot be ignored and we have the following result.
 \begin{lemma}\label{lemm0}
 When $R_I^{mi}\le R_I\le R_I^{up}$, the
 downlink rate constraint in the problem \eqref{eq6} is tight, i.e., the optimal solution will exist at the boundary with $\tau_0\log(1\!+
 \!\sigma_n^{-2}\bm{h}_{I_R}^H\bm{W}_B
 \bm{h}_{I_R}) = R_{I}$ satisfied.
  \end{lemma}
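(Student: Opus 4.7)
The plan is to argue by contradiction. Suppose that at a globally optimal solution $(\tau_0^\star,\bm{W}_B^\star,\{\bm{P}_{E_k}^\star\})$ of problem \eqref{eq6} the constraint CR2 holds strictly, that is,
\[
\tau_0^\star\log\bigl(1+\sigma_n^{-2}\bm{h}_{I_R}^H\bm{W}_B^\star\bm{h}_{I_R}\bigr)>R_I.
\]
I want to show this is inconsistent with $R_I\ge R_I^{mi}$. First I would write the Lagrangian of \eqref{eq6} and invoke complementary slackness: since the slack in CR2 is positive, the associated KKT multiplier $\lambda_2$ must vanish, and therefore the stationarity condition for $\bm{W}_B$ no longer contains the downlink-rate term. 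What remains coincides exactly with the KKT condition of the ``no-CR2'' relaxation of \eqref{eq6}, which is the fully-WPSN sum-rate maximization solved in \cite{RWPCN30}. Invoking that known solution, any KKT point of the $\lambda_2=0$ branch has $\bm{W}_B^\star=\bm{W}_B^{mi}=P_B\bm{u}_B^{mi}(\bm{u}_B^{mi})^H$.

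Second, I would evaluate CR2 at $\bm{W}_B^\star=\bm{W}_B^{mi}$: the achieved downlink rate is $\tau_0^\star\log(1+\sigma_n^{-2}\bm{h}_{I_R}^H\bm{W}_B^{mi}\bm{h}_{I_R})=\tau_0^\star R_I^{mi}$. Combining the strict-feasibility hypothesis with the assumption $R_I\ge R_I^{mi}$ gives $\tau_0^\star R_I^{mi}>R_I\ge R_I^{mi}$, which forces $\tau_0^\star>1$ and violates $0\le\tau_0\le 1$ in CR1. The borderline case $\tau_0^\star=1$ would leave zero uplink duration and hence zero objective, which cannot be optimal whenever some $\tau_0\in(0,1)$ is feasible (and this is guaranteed throughout $R_I\in[R_I^{mi},R_I^{up}]$). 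The initial assumption therefore fails and CR2 must bind at every optimum.

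The main obstacle will be the characterization step: showing that the $\lambda_2=0$ branch of the KKT system really pins $\bm{W}_B^\star$ down to $\bm{W}_B^{mi}$, rather than to some other feasible rank-one matrix on the trace sphere. This depends on the fact that, once $\lambda_2=0$, the gradient of the Lagrangian with respect to $\bm{W}_B$ reduces to $\lambda_1\bm{I}-\tau_0^\star\sum_k\mu_k\varepsilon_k\bm{H}_{E_k}^H\bm{H}_{E_k}$, and combining this with the PSD-cone complementary-slackness relation $\bm{Z}\bm{W}_B^\star=\bm{0}$ forces $\bm{W}_B^\star$ to be supported on the dominant eigenspace of $\sum_k\mu_k\varepsilon_k\bm{H}_{E_k}^H\bm{H}_{E_k}$, which is precisely the beamformer of \cite{RWPCN30}. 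A constructive alternative would be to move from $\bm{W}_B^\star$ toward $\bm{W}_B^{mi}$ along the segment $(1-\alpha)\bm{W}_B^\star+\alpha\bm{W}_B^{mi}$ and exploit the strict CR2 slack together with the monotonicity of $\log\det$, but this requires simultaneously relaxing every active CR3, which is noticeably harder to verify than the Lagrangian route sketched above.
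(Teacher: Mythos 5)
Your argument is sound and reaches the right conclusion, but it takes a genuinely different route from the paper. The paper's Appendix A never touches the KKT system: it assumes CR2 is slack at the optimum for threshold $R_I^1$, defines $R_I^2$ as the downlink rate actually achieved there, and sandwiches the two optimal values --- the old optimizer remains feasible for the harder threshold $R_I^2$ (so the optimal value cannot decrease) while the feasible set only shrinks (so it cannot increase) --- to conclude the constraint is irrelevant on all of $[R_I^1,R_I^2]$, which by the definition of $R_I^{mi}$ forces $R_I^2\le R_I^{mi}$ and contradicts $R_I^1>R_I^{mi}$. You instead go through complementary slackness: slack CR2 kills the multiplier $\beta$, the surviving stationarity condition is that of the fully-WPSN relaxation solved in \cite{RWPCN30}, hence $\bm{W}_B^{\star}=\bm{W}_B^{mi}$, and the achieved downlink rate $\tau_0^{\star}R_I^{mi}\le R_I^{mi}\le R_I$ contradicts the assumed slack (your $\tau_0^{\star}>1$). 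Your version is more quantitative and disposes of the boundary case $R_I=R_I^{mi}$ in the same stroke, whereas the paper needs a separate ``critical point'' remark there; the price is that you must justify KKT sufficiency (which holds only after the perspective reformulation of Section III.C, or for fixed $\tau_0$) and, as you yourself flag, the uniqueness of the relaxed optimizer $\bm{W}_B^{mi}$ --- though the paper's value-comparison argument leans on exactly the same characterization of the unconstrained optimum, hidden in the phrase ``this happens only when $R_I^2\le R_I^{mi}$'', so neither proof is cleaner on that point. Two small touch-ups: the chain $\tau_0^{\star}R_I^{mi}>R_I\ge R_I^{mi}$ already gives $\tau_0^{\star}>1$ strictly, so your borderline discussion of $\tau_0^{\star}=1$ is unnecessary; and your parenthetical claim that some $\tau_0\in(0,1)$ is feasible throughout $[R_I^{mi},R_I^{up}]$ fails at the endpoint $R_I=R_I^{up}$, where the unique feasible point requires $\tau_0=1$ and CR2 is tight anyway.
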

   \begin{proof}
Please see Appendix A.
\end{proof}
With respect to the problem \eqref{eq12}, by neglecting the constraint ${\text{CR}}2$, although the corresponding optimal downlink beamforming and the achievable downlink rate (${\bm{W}}_B^{mi}$ and $R_I^{mi}$) cannot be directly obtained based on the results in \cite{RWPCN30} due to the extra time allocations $\tau_{I_R}$ and $\tau_{E_k}$, they can be derived using the joint concavity of the problem \eqref{eq12} proved in  Section III. C. Then the above tightness result also holds for the problem \eqref{eq12} in the TDMA-enabled WPSNs. Since the tightness proof is similar to that in Appendix A, it is omitted here for conciseness.
% can be found  similarly to   deriving  ${\bm{W}}_B^{mi}$ and $R_I^{mi}$ for problem \eqref{eq6}, respectively.  However, since the TDMA-enabled communications  is not studied in \cite{RWPCN30},   the existing results for the problem \eqref{eq12} without the constraint ${\text{CR}}2$  are unavailable.  Fortunately,  by utilizing the joint concavity of  the problem \eqref{eq12} proved in  Section III. C,  we can still obtain  ${\bm{W}}_B^{mi}$ and $R_I^{mi}$ for the problem \eqref{eq12}. So the above tightness result also holds for the problem \eqref{eq12} in the TDMA-enabled WPSNs.}

\subsection{Problem reformulation }
In order to  solve  the uplink sum rate
maximization  problem  \eqref{eq6} for the SDMA-enabled WPSNs effectively, we define two new
variables  $\widetilde{\bm{W}}_B=
\tau_0{\bm{W}}_B$ and $\widetilde{\bm{P}}_{E_k}
=(1\!-\!\tau_0) \bm{P}_{E_k}, \forall k \!\in \!\mathcal{K}$. Then the problem \eqref{eq6}  can be reformulated as
\begin{align}\label{eq14}
  &\underset{{{\tau}_0, \{\widetilde{\bm{W}}_B,
\widetilde{\bm{P}}_{E_k}\}\succeq\bm{0}}}
{\text{max}}\!\!(1\!\!-\!\!\tau_0)\log\det\big(\widetilde{\bm{I}}
_{R}\!\!+\!\! \frac{\sigma_n^{\!-\!2}}{1\!\!-\!\!\tau_0} \!\sum\limits_{k=1}^K \bm{G}_{E_k}
   \widetilde{\bm{P}}_{E_k}\bm{G}_{E_k}^H\big)\nonumber\\
&{\rm{s.t.}}~\widetilde{\text{CR}}1:~~\text{tr}(\widetilde{\bm{W}}_B) \!\le \! P_B\tau_0,~0\!\le \! \tau_0\!\le\! 1,\nonumber\\
&~~~~~\widetilde{\text{CR}}2:~ \tau_0\log(1\!+\frac{\sigma_n^{-2}}{\tau_0}\!\bm{h
}_{I_R}^H
\widetilde{\bm{W}}_B
 \bm{h}_{I_R})\ge{ R_{I}}\nonumber\\
 &~~~~~\widetilde{\text{CR}}3:~\text{tr}(\widetilde{\bm{P}}_{E_k})\le \varepsilon_k\text{tr}(\bm{H}_{E_k}\widetilde{\bm{W}}_B\bm{H}_{E_k}^H),~\forall k.
\end{align}
Clearly, the objective function of the problem \eqref{eq14} is the perspective  of the
 concave function  $f( \widetilde{\bm{P}}_{E_k}) \!= \!
 \log\det\big(\widetilde{\bm{I}}_{R}
 \!+\!\sigma_n^{-2}\sum\limits_{k=1}^K \bm{G}_{E_k}
   \widetilde{\bm{P}}_{E_k}\bm{G}_{E_k}^H)\big)$. According  to  \cite[p.~39]{Sp},  the concavity is
preserved by the perspective operation. Therefore, the objective function is strictly and jointly
concave with respect to (w.r.t.) $\{{\tau}_0,
\widetilde{\bm{P}}_{E_k},\forall k \!\in \!\mathcal{K}\}$. In addition, all
constraints in \eqref{eq14} are  convex. We then can conclude that the problem  \eqref{eq14} is  jointly  concave w.r.t.
$\{{\tau}_0, \widetilde{\bm{W}}_B,
\widetilde{\bm{P}}_{E_k}, \forall k \!\!\in \!\!\mathcal{K} \}$.
Similarly,  by redefining $ \widetilde{\bm{P}}_{E_k}\!\!=\!\!\tau_{E_k}
\bm{P}_{E_k},\forall k \!\!\in\! \!\mathcal{K}$, the uplink sum
rate maximization  problem \eqref{eq12} for  TDMA-enabled WPSNs can also be
reformulated as
\begin{align}\label{eq15}
 &\underset{\substack{\bm{\tau},\widetilde{\bm{W}}_B\succeq\bm{0},
 \widetilde{\bm{P}}_{E_k}\succeq\bm{0}}}
{\text{max}}~\sum\limits_{k=1}^K \tau_{E_k}\log\det(\bm{I}_{N_B}+ \frac{ \sigma_n^{-2}}{\tau_{E_k}
  } \bm{G}_{E_k}
  \widetilde{\bm{P}}_{E_k}\bm{G}_{E_k}^H)\nonumber\\
&~~~~~~~~~~~~ +\tau_{I_R}\log\det(\bm{I}_{N_B}+  \sigma_n^{-2}P_I\bm{g}_{I_R}\bm{g}_{I_R}^H
   )
   \nonumber\\
   &~~{\rm{s.t.}}~~\widetilde{\text{CR}}1,~
   \widetilde{\text{CR}}2,~
   \widetilde{\text{CR}}3, ~{\text{CR}}5.
\end{align}
Following a similar logic of proving the concavity of the problem \eqref{eq14}, the problem  \eqref{eq15} is also jointly concave w.r.t. $\{\bm{\tau},
\widetilde{\bm{W}}_B,
\widetilde{\bm{P}}_{E_k}, \forall k \!\in \!\mathcal{K} \}.$ Both problems \eqref{eq14}
and  \eqref{eq15}  can be numerically
solved by standard convex optimization technique \cite{Sp}, and the globally optimal beamforming ${\bm{W}}_B$ and ${\bm{P}}_{E_k}$ and the time slots can then be obtained with simple variable substitution. However, the numerical solution not only has high computational complexity but also provides little insight. In the following, we will propose insightful semi-closed-form optimal solutions for the jointly concave problems with low complexity.
\section{Semi-closed-form optimal solutions for SDMA-enabled
and TDMA-enabled WPSNs }
It is noticed that when the time splitting ratio ${\tau}_0$ is given, the problems \eqref{eq14} and  \eqref{eq15} are still jointly concave w.r.t. the other design variables $\{\widetilde{\bm{W}}_B,
\widetilde{\bm{P}}_{E_k}, \forall k \!\!\in \!\!\mathcal{K}\}$. It inspires us to solve for the optimal solution for the other design variables by fixing ${\tau}_0$, and then find the optimal ${\tau}_0$ afterwards.
\subsection{SDMA-enabled uplink sum rate maximization} \label{A}
When ${\tau}_0$ is given, the SDMA-enabled uplink sum rate maximization problem \eqref{eq14}  can be
rewritten as
\begin{align}\label{eq21}
 f_S({\tau}_0)\!= \!&\underset
  {\substack{\widetilde{\bm{W}}_B\succeq\bm{0},  \\ \widetilde{\bm{P}}_{E_k}\succeq\bm{0}, \forall k  }}
 {\text{max}}\!\!\!\!(1\!\!-\!\!{\tau}_0)\log\!\det\!\!\bigg(\!
\widetilde{\bm{I}}_{R}\!\!+\!
\frac{\sigma_n^{-2}}{(1\!\!-\!\!{\tau}_0)} \!\sum\limits_{k=1}^K \bm{G}_{E_k}
  \widetilde {\bm{P}}_{E_k}\bm{G}_{E_k}^H\!\bigg)\nonumber\\
&{\rm{s.t.}}~~~\widetilde{\text{CR}}1,~
   \widetilde{\text{CR}}2,~
   \widetilde{\text{CR}}3,
\end{align}
and  its corresponding Lagrangian function
 is given by
\begin{align}\label{eq22}
   &L_S(\mathcal{A}_S)\!=\!(1\!-\!
 {\tau}_0)\log\det(\bm{M}_{\mathcal{K}\setminus k}^{-1}\!+\! \frac{\sigma_n^{-2}}{(1\!\!-\!\!{\tau}_0)} \bm{G}_{E_k}
  \widetilde {\bm{P}}_{E_k}\bm{G}_{E_k}^H) \nonumber\\
   &\!+\!
   \text{tr}\left((\widetilde{\bm{H}}\!+\!\bm{Z}_0)\widetilde{\bm{W}}_B\right) \!\!-\!\!\sum\limits_{k=1}^K
  \text{tr}((\mu_k\bm{I}_{N_U}
  \!-\!\bm{Z}_k)\widetilde{\bm{P}}_{E_k})
   \!+\!\epsilon_S,
\end{align}
where $\mathcal{A}_S\!\!=\!\!\{\widetilde{\bm{W}}_B,
 \widetilde{\bm{P}}_{E_k}, \bm{Z}_0,
\lambda, \beta,\bm{Z}_k,\mu_k,\forall k\}$  and \vspace{-1mm}
\begin{align}\label{eq23}
&{\bm{M}}_{\mathcal{K}\setminus k}\!\!=\!\!{\bm{M}}_{\mathcal{K}\setminus k}^{\frac{1}{2}}
 {\bm{M}}_{\mathcal{K}\setminus k}^{\frac{1}{2}}\!\!=\!\!(\widetilde{\bm{I}}_{R}
 \!+\!\frac{\sigma_n^{-2}}{(1\!\!-\!\!{\tau}_0)} \sum\limits_{i
\neq k}\bm{G}_{E_i}
   \widetilde{\bm{P}}_{E_i}\bm{G}_{E_i}^H)
   ^{-1}, \forall k,\nonumber\\
  &\widetilde{\bm{H}}\!\!=\!\!{\bm{H}}\!\!-\!\!\lambda
   \bm{I}_{N_B},~{\bm{H}}\!=\!\!\!
   \sum\limits_{k=1}^K\mu_k
   \varepsilon_k \bm{H}_{E_k}^H\bm{H}_{E_k}\! \!
   + \!\! \beta \bm{h}_{I_R}
   \bm{h}_{I_R}^H,\nonumber\\
   & \epsilon_S\!=\!\lambda {\tau}_0 P_B+
  \beta\sigma_n^2{\tau}_0(1-2^{R_I/{\tau}_0}).
\end{align}

Here, $ \{\lambda, \beta$,  $\mu_k, \forall k  \}$   are
the non-negative lagrangian multipliers corresponding to
 constraints $\widetilde{\text{CR}}1$,$\widetilde{\text{CR}}2$ and
$\widetilde{\text{CR}}3$ in the problem \eqref{eq21}, respectively. While $\bm{Z}_0\succeq\bm{0}$ an  $\bm{Z}_k\succeq\bm{0},\forall k$ are the lagrangian multipliers corresponding to
$\widetilde{\bm{W}}_B\succeq\bm{0}$ and $\widetilde{\bm{P}}_{E_k}\succeq\bm{0},\forall k$, respectively.   ${\bm{M}}_{\mathcal{K}\setminus k}^{\frac{1}{2}}$ denotes the Hermitian square root of the positive definite matrix ${\bm{M}}_{\mathcal{K}\setminus k}, \forall k $. Since the problem \eqref{eq21} is concave, its Karush-Kuhn-Tucker (KKT) conditions are necessary and sufficient for the optimal solution. Based on the KKT conditions and the definitions $\widetilde{\bm{W}}_B=
\tau_0{\bm{W}}_B$ and $\widetilde{\bm{P}}_{E_k}
=(1\!-\!\tau_0) \bm{P}_{E_k}, \forall k \!\in \!\mathcal{K}$, the optimal beamforming and lagrangian multipliers should follow the structure shown in the following theorem.
%
%
%Considering the concavity of the problem \eqref{eq21}, strong
% duality holds and we  have the  following equivalent dual problem
% \begin{align}\label{eq38}
% f_S({\tau}_0)&\!=\!\underset{
%{\lambda, \beta, \mu_k,\forall k}}{\min}
%~\underset{{\widetilde{\bm{W}}_B\succeq\bm{0},
%\widetilde{\bm{P}}_{E_k}\succeq\bm{0},\forall k }}{\max}
%L(\mathcal{A}).
%\end{align}
%Observing  from the  dual problem \eqref{eq38}, it  is clear that  the lagrangian multipliers $\{\lambda, \beta, \mu_k,\forall k \}$ are introduced as  auxiliary   variables  to find the optimal  $f_S({\tau}_0)$   of problem \eqref{eq21}.  Next, to clearly show  the  internal relations  between dual variables $\{\lambda, \beta, \mu_k,\forall k \}$ and the optimal $\{\bm{W}_B,$ $ {\bm{P}}_{E_k}, \forall k \}$  of problem \eqref{eq21},     Theorem~\ref{pop1} is inferred.

\begin{proposition}\label{pop1}
 For any given time splitting ratio ${\tau}_0$, the optimal lagrangian multipliers $\lambda^{\star},
 \beta^{\star}, \mu_k^{\star},\forall k$, the optimal
  downlink
 beamforming  $\bm{W}_B^{\star}$ and  the
 optimal uplink
 beamforming $ {\bm{P}}_{E_k}^{\star},
 \forall k$ to the problem \eqref{eq21} are expressed as
 \begin{subequations}
\begin{align}
&(\bm{W}_B^{\star},\lambda^{\star}) \! \! =
\!\!\left\{\!\!\!\begin{array}{ll}
  ( P_B\bm{u}_{{H}}\bm{u}_{{H}}^H,\lambda_{{H}}^{\max}) & 0\!\!\le\! \!R_I \!\!<\!\! R_I^{up}\\
 (\frac{P_B}{\Vert\bm{h}_{IR}\Vert^2}\bm{h}_{IR}\bm{h}_{IR}^H, 0) &R_I\!\!=\!\!R_I^{up}
\end{array}\right.\!\!\!,\label{eq250}\\
& {\bm{P}}_{E_k}^{\star}\!\!=
\!\!\bm{V}_{M_{\mathcal{K}\setminus k}}^{\star  } \! \bm{\Lambda}_{P_{E_k}}\!\bm{V}_{M_{\mathcal{K}\setminus k}}^{\star H },\label{eq251}\\
&\bm
{\Lambda}_{P_{E_k}}\!\!\!=\!\!\text{diag}[ {\Lambda}_{P_{E_k},1},\cdots,  {\Lambda}_{P_{E_k},N_U}]\\
&\Lambda_{P_{E_k},i}\!\!=\!\!\left[\frac{1}{\ln 2\mu_k^{\star}}\!-\!
\frac{\sigma_n^2}
{\Lambda_{M_{\mathcal{K}\setminus k}^{\star},i}^2}\right]^+\!\!
\!,~\forall ~i, ~\forall~ k,\label{eq252}\\
&\mu_k^{\star}\!\!=\!\! \frac{N_U(1\!\!-
\!\!{\tau}_0)}{
\ln 2\left({\tau}_0 \varepsilon_k
\text{tr}(\bm{H}_{E_k}\bm{W}_B^{\star}
\bm{H}_{E_k}^H)\!\!+\!\!\!\sum\limits_{i=1}^{N_U}
\frac{(1\!-\!{\tau}_0)\sigma_n^2}
{\Lambda_{M_{\mathcal{K}\setminus k}^{\star},i}^2}\right)}, \label{eq253}\\
&\beta^{\star}\!\!= \!\!\left\{\!\!\!\begin{array}{ll}
  0& 0\!\le\! R_I \!\le\! R_I^{mi} \\
{\arg} ~\{{f}_{R}\left(\bm{W}_B^{\star}\right)\!\!=\!\!{ R_{I}}\} &R_I^{mi}\!\!<\! R_I\! \!< \!\!R_I^{up}\\
+\infty & R_I\!\!=\!\!R_I^{up}
\end{array}\right.\!\!\!\!, \label{eq254}
\end{align}
\end{subequations}
where $\lambda_{{H}}^{\max}$
and $\bm{u}_{{H}}$ are the maximum
eigenvalue and the corresponding dominated
eigenvector of  ${\bm{H}}^{\star}\!\!
=\!\!\!\sum
\limits_{k\!=\!1}^K\mu_k^{\star}
\varepsilon_k
\bm{H}_{E_k}^H\bm{H}_{E_k}$ $\!+ \beta^{\star}
\bm{h}_{I_R}\bm{h}_{I_R}^H$, respectively,
$\bm{V}_{M_{\mathcal{K}\setminus k}}^{\star} $ is defined as the $N_U$-dimensional
right singular  matrix
of ${\bm{M}}_{\mathcal{K}\setminus k}^{\star  \frac{1}{2}}\bm{G}_{E_k} $ based on the singular value decomposition (SVD) $
   {\bm{M}}_{\mathcal{K}\setminus k}^{{\star} \frac{1}{2}}\bm{G}_{E_k}\!\! =\!\!
   \bm{U}_{M_{\mathcal{K}\setminus k}}^{\star}\bm{\Lambda}_{M_{\mathcal{K}\setminus k}}^{\star}
   \bm{V}_{M_{\mathcal{K}\setminus k}}^{\star H}, \forall k$,  and   the
diagonal matrix $ \bm{
\Lambda}_{{M_{\mathcal{K}\setminus k}}}^{\star}
=\text{diag}[\Lambda_{M_{\mathcal{K}\setminus k}^{\star},1},
\cdots,\Lambda_{M_{\mathcal{K}\setminus k}^{\star},N_U}]$ consists of  $N_U$ singular
values of ${\bm{M}}_{\mathcal{K}\setminus k}^{ \star \frac{1}{2}}\bm{G}_{E_k}$. In addition, ${f}_{R}\left(\bm{W}_B^{\star}
\right)$ denotes the achievable downlink rate
function as ${f}_{R}\left(\bm{W}_B^{\star}
\right)\!=\!{\tau}_0\log(1\!+\!\sigma_n^{-2}\bm{h}_{I_R}^H
\bm{W}_B^{\star}
 \bm{h}_{I_R})$.
\end{proposition}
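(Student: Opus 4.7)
The plan is to exploit that problem \eqref{eq21} is jointly concave in $(\widetilde{\bm{W}}_B, \widetilde{\bm{P}}_{E_k})$ for fixed $\tau_0$, so the KKT conditions are necessary and sufficient for global optimality and will already produce the candidate structures in \eqref{eq250}--\eqref{eq254}. A useful preliminary step is to notice that $\widetilde{\text{CR}}2$ is equivalent to the linear inequality $\bm{h}_{I_R}^H \widetilde{\bm{W}}_B \bm{h}_{I_R} \geq \sigma_n^2 \tau_0 (2^{R_I/\tau_0}-1)$ by monotonicity of $\log$, which keeps the Lagrangian affine in $\widetilde{\bm{W}}_B$ and yields the constant $\epsilon_S$ in \eqref{eq23} after collecting all $\widetilde{\bm{W}}_B$-independent terms.

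I would then analyze the downlink beamforming by writing $\partial L_S / \partial \widetilde{\bm{W}}_B = \bm{0}$, which collapses to $\bm{Z}_0 = \lambda \bm{I}_{N_B} - \bm{H}$ with $\bm{H}$ as in \eqref{eq23}. Combining this with $\bm{Z}_0 \succeq \bm{0}$ forces $\lambda \geq \lambda_{\max}(\bm{H})$, and complementary slackness $\bm{Z}_0 \widetilde{\bm{W}}_B = \bm{0}$ then requires the range of $\widetilde{\bm{W}}_B$ to lie in the eigenspace of $\bm{H}^\star$ attached to its largest eigenvalue. Together with activity of $\widetilde{\text{CR}}1$ whenever $\lambda^\star > 0$, this delivers the rank-one expression $\bm{W}_B^\star = P_B \bm{u}_{\bm{H}} \bm{u}_{\bm{H}}^H$ with $\lambda^\star = \lambda_{\bm{H}}^{\max}$ as claimed in \eqref{eq250}. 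The endpoint $R_I = R_I^{up}$ is a degenerate case: $\widetilde{\text{CR}}2$ then admits only the MRT direction $\bm{h}_{I_R}/\Vert\bm{h}_{I_R}\Vert$, so $\bm{W}_B^\star$ is pinned down by feasibility alone and $\lambda^\star = 0$ is admissible.

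For the uplink covariances, I would plug the SVD $\bm{M}_{\mathcal{K}\setminus k}^{\star 1/2} \bm{G}_{E_k} = \bm{U}_{M_{\mathcal{K}\setminus k}}^{\star} \bm{\Lambda}_{M_{\mathcal{K}\setminus k}}^{\star} \bm{V}_{M_{\mathcal{K}\setminus k}}^{\star H}$ into the stationarity condition $\partial L_S / \partial \widetilde{\bm{P}}_{E_k} = \bm{0}$ and diagonalize the resulting matrix equation in the right-singular basis. A standard water-filling argument then gives $\widetilde{\bm{P}}_{E_k}^\star = \bm{V}_{M_{\mathcal{K}\setminus k}}^{\star} \bm{\Lambda}_{\widetilde{P}_{E_k}} \bm{V}_{M_{\mathcal{K}\setminus k}}^{\star H}$ with the per-mode levels in \eqref{eq252} (after rescaling by $(1-\tau_0)$ to recover $\bm{P}_{E_k}$ from $\widetilde{\bm{P}}_{E_k}$). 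The closed form for $\mu_k^\star$ in \eqref{eq253} then follows by summing these $N_U$ levels and inserting them into the equality version of $\widetilde{\text{CR}}3$, which holds whenever $\mu_k^\star > 0$ by complementary slackness.

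Finally, $\beta^\star$ is handled by a three-regime split keyed to Section III-B. If $R_I \in [0, R_I^{mi}]$ the downlink rate constraint is inactive and so $\beta^\star = 0$; if $R_I \in (R_I^{mi}, R_I^{up})$, Lemma~\ref{lemm0} forces $\widetilde{\text{CR}}2$ to be tight, and $\beta^\star$ is determined implicitly by ${f}_R(\bm{W}_B^\star) = R_I$; the endpoint $R_I = R_I^{up}$ corresponds formally to $\beta^\star \to +\infty$ together with the MRT solution derived above. The main obstacle I expect is the step establishing the rank-one direction of $\bm{W}_B^\star$: even after the equation $\bm{Z}_0 = \lambda \bm{I} - \bm{H}$ is in hand, $\bm{H}^\star$ itself depends on the still-unknown duals $\{\mu_k^\star, \beta^\star\}$, so concluding that $\bm{W}_B^\star$ is rank-one along $\bm{u}_{\bm{H}}$ requires a coupled fixed-point reading of the KKT system, and the (non-generic) possibility that the top eigenvalue of $\bm{H}^\star$ is not simple must be ruled out or shown to leave the optimal objective value unchanged.
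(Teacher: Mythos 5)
Your proposal is correct and follows essentially the same route as the paper's Appendix B: KKT stationarity plus complementary slackness on $\bm{Z}_0$ to pin down the rank-one $\bm{W}_B^{\star}$ along the dominant eigenvector of $\bm{H}^{\star}$ with $\lambda^{\star}=\lambda_{\bm{H}}^{\max}$, a water-filling diagonalization in the right-singular basis of $\bm{M}_{\mathcal{K}\setminus k}^{\star\frac{1}{2}}\bm{G}_{E_k}$ for $\bm{P}_{E_k}^{\star}$ and $\mu_k^{\star}$, and the three-regime split for $\beta^{\star}$ keyed to the tightness analysis of Section III-B. The eigenvalue-multiplicity caveat you raise at the end is a fair point that the paper itself does not address (it simply asserts the rank-one form), so flagging it does not indicate a divergence from the paper's argument.
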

\begin{proof}
  Please see  Appendix B.
\end{proof}

Theorem~\ref{pop1} provides the semi-closed-form optimal solutions for the problem \eqref{eq21}. By iteratively solving from the lagrangian multipliers as well  as the  downlink and uplink beamforming based on \eqref{eq250}-\eqref{eq254}, the optimal solution can be obtained. The convergence of the iterative calculation and the global optimality of the obtained solution are also guaranteed since the problem \eqref{eq21} is jointly concave. Moreover, from Theorem~\ref{pop1}, we have the following insightful observations.

1) The optimal downlink beamforming $\bm{W}_B^{\star}$ is a rank-1 matrix, whose eigenspace is uniquely determined by the dominant eigenvector in the joint eigenspace spanned by $K$ ERs' downlink channels $\bm{H}_{E_k}, \forall k \!\in \!\mathcal{K}$ and the IR's downlink channel $\bm{h}_{I_R}$. Since SWIPT is conducted in the downlink, the downlink beamforming should provide a good balance between energy transfer and information transmission. Whether the optimal downlink beamforming aligns toward the space of the ERs' channels for energy transfer or that of the IR's channel for information transmission is controlled by the lagrangian multipliers $\{\mu_k^{\star}, \forall k \!\in \!\mathcal{K}, \beta^{\star}\}$ and depends on the downlink rate constraint. Specifically, when the downlink rate constraint is not high, i.e., $0\!\le\! R_I \! \le \! R_I^{mi}$, $\beta^{\star}=0$ holds and the optimal downlink beamforming is fully aligned with the eigenspace of the ERs' channels. In other words, the optimal downlink beamforming is designed only aiming at energy transfer while ignoring the need of information transmission since the required information transmission  can be automatically satisfied.

However, when the downlink rate constraint is high, i.e., $R_I^{mi}\!\!<\! R_I\! \!<\!R_I^{up}$, we have $\beta^{\star}={\arg} ~\{{f}_{R}\left(\bm{W}_B^{\star}\right)\!\!=\!\!{ R_{I}}\}$, the need for information transmission cannot be ignored and the optimal downlink beamforming shifts from the space of the ERs' channels toward that of the IR's channel. When the downlink rate constraint is as high as the maximum rate $ R_I\!\!=\!\!R_I^{up}$, $\beta^{\star}=+\infty$ and the downlink beamforming should be fully aligned with the IR's channel to meet the strict information transmission requirement. To some extent, the optimal lagrangian multiplier $\beta^{\star}$ therefore can be regarded as an indicator of the relativity between the downlink beamforming and the IR's downlink channel. High $\beta^{\star}$  means high relativity between the optimal downlink beamforming and the IR's downlink channel.

2) It is seen from \eqref{eq251}$\sim$\eqref{eq253} that the optimal uplink beamforming ${\bm{P}}_{E_k}^{\star}$ depends on the downlink beamforming $\bm{W}_B^{\star}$ through the lagrangian multiplier $\mu_k^{\star}$. This is due to the fact that the uplink transmission energy in each ER is constrained by its energy harvested in the downlink, i.e., the constraint $\widetilde{\text{CR}}3$. Moreover, the optimal uplink beamforming  ${\bm{P}}_{E_k}^{\star}$ for the $k$th ER $E_k$ is related to other ERs' uplink beamforming, i.e., $\{{\bm{P}}_{E_i}^{\star},\forall i\neq k\}$, through the matrix ${\bm{M}}_{\mathcal{K}\setminus k}^{\star}$. To solve for the coupled uplink beamforming $\{{\bm{P}}_{E_k}^{\star},\forall k\}$ in \eqref{eq251}, the  iterative water-filling procedure \cite{water} can be applied. With the concavity of the problem \eqref{eq21}, the iterative water-filling procedure is guaranteed to converge to the globally optimal $\{{\bm{P}}_{E_k}^{\star}, \forall k\}$. Interested readers can refer to \cite{water} for the detailed iterative process.

3) When $R_I^{mi}\!\!<\! R_I\! \!<\!R_I^{up}$, the optimal lagrangian multiplier $\beta^{\star}$ is determined by the nonlinear function
${f}_R\left(\bm{W}_B^{\star}\right)$. After analysis, we find the function ${f}_R\left(\bm{W}_B\right)$ has the monotonically increasing property as follows.
\begin{lemma}\label{lemm2}
Given the downlink beamforming structure $\bm{W}_B = P_B\bm{u}_{{H}}\bm{u}_{{H}}^H$ in \eqref{eq250}, the function ${f}_R(\bm{W}_B)$ is monotonically increasing   w.r.t. $\beta\!\in\! (0,+\infty)$ and converges to $R_I^{up}$ when $\beta \to +\infty$.
\end{lemma}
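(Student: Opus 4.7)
The plan is to reduce $f_R$ to a scalar function of $\beta$ and then treat monotonicity and the limit as separate scalar calculations. Substituting $\bm{W}_B = P_B\bm{u}_H\bm{u}_H^H$ into ${f}_R(\bm{W}_B) = \tau_0\log(1+\sigma_n^{-2}\bm{h}_{I_R}^H\bm{W}_B\bm{h}_{I_R})$ gives
\begin{equation*}
f_R \;=\; \tau_0\log\bigl(1+\sigma_n^{-2}P_B\,q(\beta)\bigr),\qquad q(\beta) \;:=\; |\bm{h}_{I_R}^H\bm{u}_H(\beta)|^2,
\end{equation*}
where $\bm{u}_H(\beta)$ is a unit dominant eigenvector of $\bm{H}(\beta) = \bm{H}_0 + \beta\bm{h}_{I_R}\bm{h}_{I_R}^H$ and $\bm{H}_0 = \sum_k \mu_k^\star\varepsilon_k\bm{H}_{E_k}^H\bm{H}_{E_k}$ is the $\beta$-independent part of \eqref{eq23}. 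Since $\log(\cdot)$ is strictly increasing, the lemma reduces to showing that $q(\beta)$ is non-decreasing on $(0,+\infty)$ and tends to $\|\bm{h}_{I_R}\|^2$.

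For monotonicity I would use the Rayleigh quotient characterization $\bm{u}_H(\beta) = \arg\max_{\|\bm{u}\|=1}\bigl(\bm{u}^H\bm{H}_0\bm{u}+\beta|\bm{h}_{I_R}^H\bm{u}|^2\bigr)$. For $\beta_1<\beta_2$ with corresponding $\bm{u}_1,\bm{u}_2$, the optimality of each vector in its own maximization yields two inequalities. Adding them makes the $\bm{u}_i^H\bm{H}_0\bm{u}_i$ terms cancel and leaves $(\beta_1-\beta_2)\bigl(q(\beta_1)-q(\beta_2)\bigr)\geq 0$, which (since $\beta_1-\beta_2<0$) forces $q(\beta_1)\leq q(\beta_2)$ and hence monotonicity of $f_R$.

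For the limit I would sandwich the top eigenvalue $\lambda_{\max}(\beta)$. The test vector $\bm{h}_{I_R}/\|\bm{h}_{I_R}\|$ inserted into the Rayleigh quotient gives the lower bound $\lambda_{\max}(\beta)\geq\beta\|\bm{h}_{I_R}\|^2+c_0$ with $c_0$ independent of $\beta$. Decomposing $\bm{u}_H(\beta) = \alpha\bm{h}_{I_R}/\|\bm{h}_{I_R}\|+\bm{v}$ with $\bm{v}\perp\bm{h}_{I_R}$, $|\alpha|^2+\|\bm{v}\|^2=1$, the identity $\lambda_{\max}(\beta)=\bm{u}_H^H\bm{H}_0\bm{u}_H+\beta|\alpha|^2\|\bm{h}_{I_R}\|^2$ together with $\bm{u}_H^H\bm{H}_0\bm{u}_H\leq\|\bm{H}_0\|_{\mathrm{op}}$ produces the matching upper bound $\lambda_{\max}(\beta)\leq\|\bm{H}_0\|_{\mathrm{op}}+\beta|\alpha|^2\|\bm{h}_{I_R}\|^2$. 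Comparing the two bounds gives $|\alpha|^2\geq 1-O(1/\beta)$, so $q(\beta)=|\alpha|^2\|\bm{h}_{I_R}\|^2\to\|\bm{h}_{I_R}\|^2$, and $f_R\to\tau_0\log(1+\sigma_n^{-2}P_B\|\bm{h}_{I_R}\|^2)=R_I^{up}$ on the boundary regime where Section~III.A forces $\tau_0=1$.

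The main obstacle I anticipate is bookkeeping around possible multiplicity of $\lambda_{\max}(\bm{H}(\beta))$ at isolated values of $\beta$, where $\bm{u}_H(\beta)$ is not unique; however, $q(\beta)$ is still well defined as the common value of $|\bm{h}_{I_R}^H\bm{u}|^2$ on the top eigenspace (by a rank-one perturbation argument on $\beta\bm{h}_{I_R}\bm{h}_{I_R}^H$), and the sum-of-inequalities step never uses uniqueness, so both the monotonicity and limit arguments go through without modification.
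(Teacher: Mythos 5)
Your proof is correct, but it takes a genuinely different route from the paper's. The paper (Appendix C) argues \emph{indirectly}: it first bounds $f_R$, computes the two endpoint values $f_R|_{\beta=0}<R_I<f_R|_{\beta\to+\infty}$, then proves by a contradiction on the rank-one eigenstructure of $\bm{H}_1^{\star}-\bm{H}_2^{\star}$ that no two distinct $\beta$'s can yield the same optimal $\bm{W}_B^{\star}$ solving $f_R(\bm{W}_B^{\star})=R_I$, and finally invokes continuity plus this injectivity to conclude monotonicity, with the ordering of the endpoints fixing the direction. You instead prove monotonicity \emph{directly} via the variational characterization of the dominant eigenvector: the two optimality inequalities for $\bm{u}_H(\beta_1)$ and $\bm{u}_H(\beta_2)$ added together give $(\beta_1-\beta_2)(q(\beta_1)-q(\beta_2))\ge 0$, which is a cleaner and more robust argument that entirely avoids the paper's delicate case analysis on $\widetilde{\bm{U}}_{H_2}\widetilde{\bm{Q}}_H^{1/2}$; your sandwich bound on $\lambda_{\max}(\beta)$ also gives the limit with an explicit $1-O(1/\beta)$ rate rather than the paper's qualitative $\bm{H}^{\star}/\beta\to\sigma_n^{-2}\bm{h}_{I_R}\bm{h}_{I_R}^H$ appeal. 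What the paper's route buys in exchange is the uniqueness of the root $\beta^{\star}$ of $f_R=R_I$, which is what it actually needs to justify the bisection in \eqref{eq254}; your weak monotonicity plus continuity and the strict endpoint inequalities delivers the same operational conclusion.

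Three minor caveats, none fatal. First, the exchange argument only yields that $q(\beta)$ is non-decreasing, not strictly increasing; this is in fact all that is true in degenerate configurations (e.g.\ $\bm{h}_{I_R}$ orthogonal to the dominant eigenspace of $\bm{H}_0$ for small $\beta$), and it suffices for the bisection. Second, your parenthetical claim that $\vert\bm{h}_{I_R}^H\bm{u}\vert^2$ takes a common value on a degenerate top eigenspace is false in general (take $\bm{H}_0=\mathrm{diag}(2,0)$, $\bm{h}_{I_R}=\bm{e}_2$, $\beta=2$), but this is inessential since, as you note, the exchange inequalities hold for any selection of maximizers. Third, for a fixed $\tau_0$ the limit is $\tau_0\log(1+\sigma_n^{-2}P_B\Vert\bm{h}_{I_R}\Vert^2)=\tau_0 R_I^{up}$ rather than $R_I^{up}$; this imprecision is inherited from the paper's own statement, and your attempt to excuse it by forcing $\tau_0=1$ is not consistent with the lemma's use inside Theorem~\ref{pop1}, where $\tau_0$ is an arbitrary given value --- the honest fix is to read the limit as $\tau_0 R_I^{up}$, which is exactly the feasibility threshold of $\widetilde{\text{CR}}2$ for that $\tau_0$.
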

\begin{proof}
  Please see Appendix C.
\end{proof}
Based on Lemma~\ref{lemm2}, the optimal $\beta^{\star}$ satisfying ${f}_R(\bm{W}_B^{\star})\!=\! { R_{I}} $ can be uniquely determined by the bisection search. Now with the semi-closed-form optimal solution in Theorem~\ref{pop1} for the problem \eqref{eq21}, our remaining task is to find the optimal time splitting ratio ${\tau}_0$ to achieve the maximum uplink sum rate. Mathematically, it is to solve the problem
 ${\tau}_0^{\star}\!\!=\!\!arg \underset{0\le {\tau}_0\le 1}{ \max}~f_S({\tau}_0) $. Since the objective function $f_S({\tau}_0)$  is  concave w.r.t. ${\tau}_0$, the Golden section search can be  utilized  to find the globally optimal ${\tau}_0$ \cite{golden}.
%
%Finally, we  also find that  with the obtained optimal $\widetilde{\bm{W}}_B^{\star}\!\!=\!\!
%\tilde{\tau}_0{\bm{W}}_B^{\star}$, $\widetilde{\bm{P}}_{E_k}^{\star}
%\!\!=\!\!(1\!-\!\tilde{\tau}_0) \bm{P}_{E_k}^{\star}, \forall k$ for  problem \eqref{eq21} according to Theorem~\ref{pop1}, the objective function  $f_S(\tilde{\tau}_0)$  is   still concave w.r.t $\tilde{\tau}_0$.  Therefore,  our remaining task is to find the optimal $\tilde{\tau}_0$ for     problem \eqref{eq21} by  solving
% $\tilde{\tau}_0^{\star}\!\!=\!\!arg \underset{0\le \tilde{\tau}_0\le 1}{ \max}~f_S(\tilde{\tau}_0) $. Generally,  the Golden section search
% can be  utilized  to find the globally optimal $\tilde{\tau}_0$ for   the unimodal and concave function $f_S(\tilde{\tau}_0)$  \cite{golden}.

\subsection{TDMA-enabled uplink sum rate maximization}
Due to the involvement of additional uplink time allocation vector, the TDMA-enabled uplink sum rate maximization in \eqref{eq15} is more challenging than the problem \eqref{eq14}. As far as we know,  the joint design of beamforming and time allocation vector for TDMA-enabled WPSNs is rarely discussed in the literature. Here we will follow a similar approach as that for SDMA-enabled WPSNs to solve this challenging problem. To be specific, by fixing the time splitting ratio ${\tau}_0$, the problem \eqref{eq15} can be rewritten as
 \begin{align}\label{eq27}
 f_T({\tau}_0)\!=\!&\underset{
 \substack{\bm{\tau}_{up},\widetilde{\bm{W}}_B\succeq\bm{0},\\
 \widetilde{\bm{P}}_{E_k}\succeq\bm{0},\forall k }}
{\text{max}}~\sum\limits_{k=1}^K \tau_{E_k}\log\det(\bm{I}_{N_B}\!\!+\!\!  \frac{\sigma_n^{\!-\!2}}{\tau_{E_k}} \bm{G}_{E_k}
   \widetilde{\bm{P}}_{E_k}\bm{G}_{E_k}^H)\nonumber\\
  &~~~~~~~~~~~~~+\tau_{I_R}\log\det(\bm{I}_{N_B}+ { \sigma_n^{-2}} P_I \bm{g}_{I_R}\bm{g}_{I_R}^H
   )\nonumber\\
   &{\rm{s.t.}}~~\widetilde{\text{CR}}1, ~
   \widetilde{\text{CR}}2, ~\widetilde{\text{CR}}3,~
   {\text{CR}}5.
 \end{align}
 where $\bm{\tau}_{up}\!\!=\!\![\tau_{E_1},\cdots,\tau_{E_K},
   \tau_{I_R}]$ denotes the uplink time allocation.
   % and the uplink beamforming is redefined as  $ \widetilde{\bm{P}}_{E_k}\!\!=\!\!\tau_{E_k}
%\bm{P}_{E_k},\forall k \!\!\in\! \!\mathcal{K}$.
Clearly,
 the problem  \eqref{eq27}  is
  jointly concave  w.r.t. $\{\bm{\tau}_{up}, \widetilde{\bm{W}}_B, \widetilde{\bm{P}}_{E_k}, \forall k\}$, and
 the corresponding  Lagrangian function  is expressed as\vspace{-2mm}
\begin{align}\label{eq28}
{L_T}({\mathcal{A}_T})&\!=\!\sum\limits_{k=1}^K \tau_{E_k}(\log\det(\bm{I}_{N_B}\!+\!  \frac{\sigma_n^{-2}}{\tau_{E_k}} \bm{G}_{E_k}
   \widetilde{\bm{P}}_{E_k}\bm{G}_{E_k}^H)\!-\!\gamma)\nonumber\\
  & \!+\!
  \text{tr}\left((\widetilde{\bm{H}}\!+\!\bm{Z}_0)
  \widetilde{\bm{W}}_B\right)\! -\!\sum\limits_{k=1}^K\!
 \text{tr}((\mu_k\bm{I}_{N_U}
  \!\!-\!\!\bm{Z}_k)\widetilde{\bm{P}}_{E_k})\nonumber\\
  &\!+\!\tau_{I_R}(C_R\!-\!\gamma) \!+\!{\epsilon}_T.
\end{align}
where ${\mathcal{A}_T}\!\!=\!\!\{\!\bm{\tau}_{up},
\widetilde{\bm{W}}_B,
 \widetilde{\bm{P}}_{E_k},
\lambda, \beta,\gamma, \bm{Z}_0,\bm{Z}_k, \mu_k,\forall k\!\}$,  $C_R\!\!=\!\!\log\det(\bm{I}_{N_B}$ $+ { \sigma_n^{-2}} P_I \bm{g}_{I_R}\bm{g}_{I_R}^H
   )$ is a  constant denoting the spectral efficiency of the IR's uplink channel, and ${\epsilon}_T={\epsilon}_S-\gamma(
   {\tau}_0-1)$. Here,  $ \{\lambda, \beta$,
   $\mu_k,\forall k$,$\gamma\}$  denote
   the non-negative lagrangian multipliers corresponding to the
   constraints $\widetilde{\text{CR}}1$,
   $\widetilde{\text{CR}}2$, $\widetilde{\text{CR}}3$
   and ${\text{CR}}5$ in the
   problem \eqref{eq27}, respectively.   $\bm{Z}_0\succeq\bm{0}$ an  $\bm{Z}_k\succeq\bm{0},\forall k$ are still the lagrangian multipliers corresponding to
$\widetilde{\bm{W}}_B\succeq\bm{0}$ and $\widetilde{\bm{P}}_{E_k}\succeq\bm{0},\forall k$, respectively.  With the concavity of the problem \eqref{eq27} and based on its KKT conditions, the optimal structures for the variables  $\{\bm{\tau}_{up},
{\bm{W}}_B,
 {\bm{P}}_{E_k},
\lambda, \beta,\gamma, \mu_k,\forall k\}$ can be derived in the following theorem.
%the  strict
% feasibility of  problem \eqref{eq27}, the strong
% duality is also established for  the problem \eqref{eq27},
% and then  based on the
%   Lagrangian function in \eqref{eq28},
%  the equivalent dual   problem of   problem \eqref{eq27} is
%  given by
%\begin{align}\label{eq281}
% f_T(\tilde{\tau}_0)\!\!=\!\!\!\underset{
%{\lambda, \beta, \gamma, \mu_k,\forall k}}{\min}~
%\underset{ {\bm{\tau}_{up}, \widetilde{\bm{W}}_B,
%\widetilde{\bm{P}}_{E_k},\forall k}}{\max}
%\widetilde{L}(\widetilde{\mathcal{A}}).
%\end{align}
%Further, similarly to the derivation of  {Theorem~\ref{pop1}},
% by solving the dual problem \eqref{eq281} effectively, we can summarize   the  semi-closed optimal solutions of
%problem \eqref{eq27}   in   Theorem~\ref{pop2}.
 \begin{proposition}\label{pop2}
  For any given time splitting ratio ${\tau}_0$, the optimal
  $\{\bm{W}_B^{\star}$,  $\lambda^{\star}$, $\beta^{\star}\}$
  to  the problem
 \eqref{eq27}  are
 identical to that in {Theorem~\ref{pop1}}, while
  the optimal  $\bm{\tau}_{up}^{\star}$ and  ${\bm{P}}_{E_k}^{\star},
 \forall k$ to  the problem
 \eqref{eq27} as well as the optimal lagrangian multipliers $\mu_k^{\star}, \forall k, \gamma^{\star},$ are given by
\begin{subequations}
\begin{align}
& {\bm{P}}_{E_k}^{\star}\!\!\!=\!\!
\bm{V}_{G_{E_k}} \!\! \bm{\Lambda}_{P_{E_k}}\!\!\bm{V}_{G_{E_k}}^H/\tau_{E_k}^{\star},\label{eq2900}\\
&\bm
{\Lambda}_{P_{E_k}}\!\!\!\!=\!\text{diag}[ {\Lambda}_{P_{E_k}\!,\!1},\!\cdots\!,  {\Lambda}_{P_{E_k}\!,\!N_U}\!]
\label{eq2901}\\
&\Lambda_{P_{E_k},i}\!\!=\!\!\left[\frac{\tau_{E_k}^{\star}}{\ln2
\mu_k^{\star}}\!\!-\!\!
\frac{\sigma_n^2\tau_{E_k}^{\star}}{{\Lambda}_{G_{E_k}, i}^2}\right]^+,~i=1,\cdots, N_U\label{eq2902}\\
&\mu_k^{\star}\!\!=\!\! \frac{N_U\tau_{E_k}^{\star}}{
\ln 2\left({\tau}_0 \varepsilon_k
\text{tr}(\bm{H}_{E_k}\bm{W}_B^{\star}
\bm{H}_{E_k}^H)\!\!+\!\!\sum\limits_{i=1}^{N_U} \frac{\sigma_n^2\tau_{E_k}^{\star}}
{\Lambda_{G_{E_k}\!,i}^2}\right)}, \forall k,\label{eq291}\\
&\bm{\tau}_{up}^{\star}\!\!=\!\!\left\{\!\!\begin{array}{ll}
 %\!\!{\big[({\tau}_{E_k}\!)^{+},\!\forall k, \!0]=
 {\underset{[(\!{\tau}_{E_k})^{+},\forall k, 0]}{\arg}\!\! \left\{\begin{array}{l}\!\!\!{{g}_T({\tau}_{E_k})={\gamma^{\star}}} \\
 {\!\!\!\sum\limits_{k=1}^{K}\!({\tau}_{E_k})^+\!\!=\!1 \!\!- \!{\tau}_0 } \\ \!\!\!{\tau}_{I_R}=0 \!\!\!\!\end{array}\right\}} & \!\!{{\gamma^{\star}}\!>\!C_R} \\
 %\!\!{\big[({\tau}_{E_k}\!)^{+},\!\forall k, \!({\tau}_{I_R})^+]=
 {\underset{[({\tau}_{E_k}\!)^{+},\forall k, {\tau}_{I_R}]}{\arg}\!\! \left\{\begin{array}{l}\!\!\!{{g}_T({\tau}_{E_k})=C_R} \\
 {\!\!\!\!\sum\limits_{k=1}^{K}\!({\tau}_{E_k})^+\!\!+
 \!{\tau}_{I_R}\!=\!1 \!\!- \!\!{\tau}_0} \\
 \!\!\!{\tau_{I_R}>0}\!\!\!\end{array}\!\!\!\right\}} & \!\!{{\gamma^{\star}}\!=\!C_R}
\end{array}\right.\label{eq2916}
%
%
%&{\color{red}{(\bm{\tau}_{up}^{\star},\gamma^{\star})}
%\!\!=\!\!{\arg} \underset{  \left(\grave{\bm{\tau}}_{up},\grave{\gamma}\right)/\left(\acute{\bm{\tau}}_{up},C_R\right) }{\max}  ~~{f}_{T}\left(\tilde{\tau}_0\right)} \label{eq2912}\\
% &\grave{\bm{\tau}}_{up}\!\!=\!\!\big[(\!\grave{\tau}_{E_k}\!)^{+},\forall k, 0\big],~{g}_T(\grave{\tau}_{E_k})
% \!\!=\!\!\grave{\gamma},\sum\limits_{k=1}^{K}\!
% (\grave{\tau}_{E_k})^+\!
% \!=\!1 \!\!- \!\!\tilde{\tau}_0,\label{eq2913}\\
%&\acute{\bm{\tau}}_{up}\!\!\!=\!\!\!\big[(\acute{\tau}_{E_k}\!)^{\!+\!},\! \forall k,\!\acute{\tau}_{I_R}\!\big],{g}_T(\!\acute{\tau}_{E_k}\!)
% \!\!=\!\!C_R,
%\acute{\tau}_{I_R}\!\!\!=\!\!1\!\!-\!\!\tilde{\tau}_0\!\!-\!\!\!\!
%\sum\limits_{k\!=\!1}^{K}\!(\acute{\tau}_{E_k}\!)^{\!+\!}\label{eq2916}
\end{align}
\end{subequations}
where  $\bm{V}_{G_{E_k}}
 \!\!\in\!\!\mathbb{C}^{N_U\times N_U},\forall k $ is the right singular matrix  of $\bm{G}_{E_k}$ by performing the SVD $\bm{G}_{E_k}=\bm{U}_{G_{E_k}}
 \bm{\Lambda}_{G_{E_k}}
 \bm{V}_{G_{E_k}}^H$,
the diagonal matrix  $\bm{\Lambda}_{G_{E_k}}=
 \text{diag}[{\Lambda}_{G_{E_k},1},
 \cdots, {\Lambda}_{G_{E_k}, N_U}]$ consists of
  $N_U$  singular values of $\bm{G}_{E_k},\forall k$, and the function
${g}_T(\tau_{E_k})\!\!=\!\!\!\sum\limits_{
i\!=\!1}^{N_U} \!\left(\log\!\bigg(1\!+
 \!\frac{
 \sigma_n^{-2}\overline{{\Lambda}}_{G_{E_k},i}}{
 \tau_{E_k}}\!\bigg)\!\!-\!\!\frac{ \sigma_n^{-2}\overline{{\Lambda}}_{G_{E_k},i}}{
 \ln2(\tau_{E_k}\!+
 \sigma_n^{-2}\overline{{\Lambda}}_{G_{E_k},i})}
 \right)$ is defined with
$\overline{{\Lambda}}_{G_{E_k},i}\!\!=\!\!{{\Lambda}}_{G_{E_k},i}^2\Lambda_{P_{E_k},i}, \forall i\!\!=\!\!1,\cdots, N_U$.
%Note that  only when   $\acute{\tau}_{I_R}>0$, the  optimal $\acute{\bm{\tau}}_{up}$ in \eqref{eq2916} is meaningful.
\end{proposition}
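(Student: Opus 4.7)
The plan is to exploit the joint concavity of problem~\eqref{eq27} established in Section~III.C so that its KKT conditions are necessary and sufficient, and then to mirror, with suitable modifications, the approach already used for Theorem~\ref{pop1}. First, I would write down stationarity in the Lagrangian $L_T(\mathcal{A}_T)$ of~\eqref{eq28} with respect to each primal variable, dispose of the downlink variables by recycling the SDMA analysis, then handle the per-ER uplink covariances (which are simpler than in SDMA because of TDMA decoupling), and finally tackle the new and strictly harder ingredient, the uplink time-allocation vector $\bm{\tau}_{up}$.

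For $\{\bm{W}_B^{\star},\lambda^{\star},\beta^{\star}\}$, I would note that $\widetilde{\bm{W}}_B$ enters~\eqref{eq27} only through $\widetilde{\text{CR}}1$, $\widetilde{\text{CR}}2$, $\widetilde{\text{CR}}3$, which are identical in form to those of~\eqref{eq21}. Stationarity $\partial L_T/\partial\widetilde{\bm{W}}_B=\bm{0}$ together with complementary slackness $\widetilde{\bm{W}}_B\bm{Z}_0=\bm{0}$, plus the downlink-rate tightness result (Lemma~\ref{lemm0}, extended to the TDMA problem in Section~III.B), reproduces exactly the rank-one alignment $\bm{W}_B^{\star}=P_B\bm{u}_{H}\bm{u}_{H}^H$ in~\eqref{eq250} and the characterisation of $\beta^{\star}$ in~\eqref{eq254}. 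No new calculation is required for this portion.

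For each uplink covariance $\widetilde{\bm{P}}_{E_k}$, the key simplification relative to SDMA is that in the TDMA objective only the $k$-th $\log\det$ depends on $\widetilde{\bm{P}}_{E_k}$, so the coupling matrix $\bm{M}_{\mathcal{K}\setminus k}^{\star}$ of Theorem~\ref{pop1} disappears. Stationarity reduces to the single-user capacity equation
\begin{align*}
\sigma_n^{-2}\bm{G}_{E_k}^H\Bigl(\bm{I}_{N_B}+\tfrac{\sigma_n^{-2}}{\tau_{E_k}^{\star}}\bm{G}_{E_k}\widetilde{\bm{P}}_{E_k}\bm{G}_{E_k}^H\Bigr)^{-1}\bm{G}_{E_k}=\ln 2\,(\mu_k^{\star}\bm{I}_{N_U}-\bm{Z}_k),
\end{align*}
which, upon diagonalisation by the SVD $\bm{G}_{E_k}=\bm{U}_{G_{E_k}}\bm{\Lambda}_{G_{E_k}}\bm{V}_{G_{E_k}}^H$ and use of $\widetilde{\bm{P}}_{E_k}\bm{Z}_k=\bm{0}$, yields the water-filling form~\eqref{eq2900}--\eqref{eq2902}. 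The multiplier $\mu_k^{\star}$ is then pinned down by observing that $\widetilde{\text{CR}}3$ must be active at optimum (otherwise $\widetilde{\bm{P}}_{E_k}$ could be uniformly scaled up), summing the $N_U$ water levels, and solving, which produces~\eqref{eq291}.

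The hardest step, and the main obstacle, is the time-allocation characterisation~\eqref{eq2916}, because each $\tau_{E_k}$ appears both as an outer multiplicative prefactor and inside the perspective of the $\log\det$, whereas the IR slot $\tau_{I_R}$ enters only linearly through $C_R$. Computing $\partial L_T/\partial\tau_{E_k}=0$ after substituting the optimal $\widetilde{\bm{P}}_{E_k}$ and using $\overline{\Lambda}_{G_{E_k},i}=\Lambda_{G_{E_k},i}^2\Lambda_{P_{E_k},i}$ simplifies, after a short calculation, to $g_T(\tau_{E_k})=\gamma^{\star}$, where $\gamma^{\star}$ is the multiplier of the equality constraint $\text{CR}5$; the corresponding condition for $\tau_{I_R}$ is $C_R\le\gamma^{\star}$ with equality whenever $\tau_{I_R}^{\star}>0$. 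This dichotomy, combined with $\text{CR}5$, gives the two branches in~\eqref{eq2916}: when the shadow price $\gamma^{\star}$ of the time budget strictly exceeds the IR spectral efficiency $C_R$, allocating any slot to the IR is suboptimal, so $\tau_{I_R}^{\star}=0$ and the ER slots fully saturate $1-\tau_0$; when $\gamma^{\star}=C_R$, the IR obtains a positive share while the ER slots still satisfy $g_T(\tau_{E_k})=C_R$. To close the argument I would verify monotonicity of $g_T$ in $\tau_{E_k}$, which guarantees uniqueness of the ER slot solving $g_T(\tau_{E_k})=\gamma^{\star}$, and then observe that the tuple $(\bm{\tau}_{up}^{\star},{\bm{P}}_{E_k}^{\star},\mu_k^{\star},\gamma^{\star})$ together with the inherited $(\bm{W}_B^{\star},\lambda^{\star},\beta^{\star})$ satisfies all remaining KKT conditions, which by concavity certifies global optimality.
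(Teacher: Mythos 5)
Your proposal is correct and follows essentially the same route as the paper's Appendix D: invoke sufficiency of the KKT conditions by concavity, inherit $\{\bm{W}_B^{\star},\lambda^{\star},\beta^{\star}\}$ from the SDMA analysis, obtain the decoupled single-user water-filling form of ${\bm{P}}_{E_k}^{\star}$ via the SVD of $\bm{G}_{E_k}$ with $\mu_k^{\star}$ fixed by the active energy constraint, and derive the two branches of \eqref{eq2916} from the stationarity condition $g_T(\tau_{E_k})=\gamma^{\star}$ together with $C_R\le\gamma^{\star}$ and complementary slackness on $\tau_{I_R}$. The only cosmetic difference is that you justify activity of $\widetilde{\text{CR}}3$ by a scaling argument while the paper formally shows $\mu_k^{\star}>0$ from the KKT structure; both are valid.
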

 \begin{proof}
   Please  see Appendix D.
 \end{proof}

From Theorem~\ref{pop2}, we also have the following insightful observations.

1) Since both  SDMA-enabled WPSNs and TDMA-enabled WPSNs have the same  downlink  transmission stage,  the optimal downlink beamforming $\bm{W}_B^{\star}$ for TDMA-enabled uplink sum rate maximization has the same form as that for SDMA-enabled uplink sum rate maximization. Thus the insightful results on $\{\bm{W}_B^{\star}, \beta^{\star}\}$ for SDMA-enabled WPSNs is also applicable for TDMA-enabled WPSNs.

2) From \eqref{eq2900}$\sim$\eqref{eq291}, we can see that the optimal uplink beamforming ${\bm{P}}_{E_k}^{\star}$ at the $k$th ER depends on the downlink beamforming $\bm{W}_B^{\star}$, but is independent to the uplink beamforming of other ERs, i.e.,  $\{{\bm{P}}_{E_i}^{\star},
\forall i\neq k\}$. This is due to the fact that dedicated time slot is allocated to each ER for uplink transmission in TDMA-enabled WPSNs and thus the uplink beamforming design for each ER can be decoupled.

{3) Additional uplink time allocation is required in TDMA-enabled WPSNs. As shown in \eqref{eq2916}, to maximize the uplink sum rate, there are two possible uplink time allocations depending on the channel conditions of all sensor nodes. Particularly,  the function ${g}_T(\tau_{E_k})$ can also  be roughly regarded as the uplink spectral efficiency of the $k$th ER. Since the uplink spectral efficiency of IR is a constant as $C_R\!=\!\log\det(\bm{I}_{N_B}$ $+ { \sigma_n^{-2}} P_I \bm{g}_{I_R}\bm{g}_{I_R}^H
)$, the optimization of the uplink time slot $\tau_{I_R}$ for  IR  is actually a linearly constrained linear programming problem.  Specifically, when the spectral efficiencies of all  ERs ${g}_T(\tau_{E_k}^{\star})$ are higher than that of the IR $C_R$, it is naturally to allocate the  total uplink time resource (i.e., $1-\tau_0$) only to the ERs, which are  presented in the case of ${{\gamma^{\star}}\!>\!C_R} $  in   \eqref{eq2916}, while no time slot is allocated to the IR, namely $\tau_{I_R}^{\star}=0$.  Otherwise, in order  to guarantee  the nonzero uplink time allocation $\tau_{I_R}^{\star}$ for  the IR, there is at least  a ER achieving the same spectral efficiency as that of the IR, namely, ${{g}_T({\tau}_{E_k}^{\star})=C_R}$.  Thus the total  uplink time  resource (i.e., $1-\tau_0$) is optimally allocated  as that in the case of ${{\gamma^{\star}}\!=\!C_R} $ of  \eqref{eq2916}. Overall, for both two cases, the uplink  time allocation is somehow similar to the water-filling procedure and the spectral efficiency can be regarded as water level to be optimized higher than or equal to $C_R$. To find the optimal uplink time allocation, the nonlinear equation ${{g}_T({\tau}_{E_k})={\gamma^{\star}}}$ needs to be solved. Fortunately, the function ${g}_T({\tau}_{E_k})$ has the following property which can facilitate the derivation of the optimal time slot ${\tau}_{E_k}$.}
%
%,  Therefore,   we  naturally  consider the two possible cases of optimal $\tau_{I_R}^{\star}$.  One   is $\tau_{I_R}^{\star}=0$,  then the optimal uplink time allocation $\tau_{E_k}^{\star}=\grave{\tau}_{E_k},\forall k$ of all ERs  can be derived  according to \eqref{eq2913}. In this context, we  readily infer that  the achievable maximum uplink  rate of each ER must be  higher than the fixed uplink  rate $C_R$ of IR. Otherwise,  the IR can also occupy  the same time duration to realize the higher uplink  rate.  The other  is $\tau_{I_R}^{\star}=\acute{\tau}_{I_R}\neq 0$, then the optimal $\tau_{E_k}^{\star}=\acute{\tau}_{E_k},\forall k$ is derived  according to \eqref{eq2916}. By comparing the   achievable  objective value ${f}_T(\tilde{\tau}_0)$ under the two cases, the optimal uplink time allocation $\bm{\tau}_{up}^{\star}$ is given by \eqref{eq2912}. We further observe that both the optimal
%$\bm{\tau}_{up}^{\star}$ in \eqref{eq2913} and  \eqref{eq2916}  are   determined   by  the  function   ${g}_T(\tau_{E_k})$.  Therefore, we firstly  explore the property of  ${g}_T(\tau_{E_k})$  as follows.
{\begin{lemma}\label{lemm4}
The function ${g}_T(\tau_{E_k})
 $  is monotonically decreasing w.r.t.   $\tau_{E_k}> 0, \forall k$.
\end{lemma}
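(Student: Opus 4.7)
The plan is to prove Lemma~\ref{lemm4} by a direct termwise differentiation of $g_T(\tau_{E_k})$ and verifying that the derivative is strictly negative on $(0,\infty)$. To keep the algebra readable, I would first abbreviate $b_i := \sigma_n^{-2}\overline{\Lambda}_{G_{E_k},i}\ge 0$, so that the summand for each $i$ becomes
$h_i(\tau) = \log\!\left(1 + \tfrac{b_i}{\tau}\right) - \tfrac{b_i}{\ln 2\,(\tau+b_i)}$. Since a finite sum commutes with differentiation, it suffices to show $h_i'(\tau)\le 0$ for every $i$, with strict inequality for at least one $i$.

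Interpreting $\log$ as $\log_2$, the first piece satisfies $\tfrac{d}{d\tau}\log(1+b_i/\tau)=-\tfrac{b_i}{\ln 2\cdot\tau(\tau+b_i)}$, and the second piece contributes $\tfrac{d}{d\tau}\!\left(-\tfrac{b_i}{\ln 2\,(\tau+b_i)}\right)=\tfrac{b_i}{\ln 2\,(\tau+b_i)^2}$. Placing these over the common denominator $\ln 2\cdot\tau(\tau+b_i)^2$ and collecting terms, I expect the telescoping
$h_i'(\tau) = -\dfrac{b_i^{\,2}}{\ln 2\cdot\tau\,(\tau+b_i)^2}\le 0$,
with strict inequality whenever $b_i>0$. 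Summing over $i=1,\dots,N_U$ and noting that at least one $\overline{\Lambda}_{G_{E_k},i}$ is positive (otherwise the $k$th ER contributes nothing to the sum rate and the lemma is vacuously true), I would conclude $g_T'(\tau_{E_k})<0$, which gives the strict monotonic decrease claimed.

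As a conceptual sanity check, I would also note the perspective viewpoint: the map $\tau\mapsto \tau\log(1+b_i/\tau)$ is the perspective of the concave function $x\mapsto\log(1+b_i x)$ and is therefore concave in $\tau>0$; differentiating this perspective once recovers precisely $h_i(\tau)$, so $h_i$ is the derivative of a strictly concave function and is automatically decreasing. Either route gives the result, and there is no genuine obstacle; the only subtlety is the edge case $b_i=0$ (arising when $\Lambda_{P_{E_k},i}=0$ from the water-filling truncation), in which the $i$th summand vanishes identically and contributes neither to $g_T$ nor to $g_T'$, so it affects neither the domain of definition nor the sign argument. The proof should therefore fit in a few lines once the common denominator is cleared.
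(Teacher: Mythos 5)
Your proof is correct and follows essentially the same route as the paper: a direct termwise differentiation yielding $\nabla_{\tau_{E_k}}{g}_T(\tau_{E_k}) = \sum_{i=1}^{N_U} \frac{-(\sigma_n^{-2}\overline{{\Lambda}}_{G_{E_k},i})^2}{\ln 2\,(\tau_{E_k}+\sigma_n^{-2}\overline{{\Lambda}}_{G_{E_k},i})^2\,\tau_{E_k}} < 0$, which is exactly the expression in Appendix-free proof given after the lemma. Your added remarks (the perspective-function interpretation and the degenerate case $\overline{{\Lambda}}_{G_{E_k},i}=0$) are correct refinements but do not change the argument.
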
}

\begin{proof}
Taking the first order derivation of ${g}_T(\tau_{E_k})$ on $\tau_{E_k}$,
we have
$\nabla_{\tau_{E_k}}{g}_T(\tau_{E_k})\!\! =\!\!
\sum\limits_{
i\!=\!1}^{N_U} \frac{-\!(\sigma_n^{-2}\overline{{\Lambda}}_{G_{E_k},i})^2}{\ln2( \tau_{E_k}\! +\!\sigma_n^{-2}\overline{{\Lambda}}_{G_{E_k},i})^2
\tau_{E_k}}$.
It is readily observed that for $\tau_{E_k}\!\!>\!\!0$,
we have  $\nabla_{\tau_{E_k}}{g}_T(\tau_{E_k})\!\!<\!\!0, \forall k$. Therefore,
 the function ${g}_T(\tau_{E_k}),\forall k$ is monotonically decreasing w.r.t. $\tau_{E_k}\!\!>\!\!0$.
   \end{proof}
  With the monotonicity of ${g}_T({\tau}_{E_k})$, the solution of ${g}_T({\tau}_{E_k})
 \!=\!{\gamma^{\star}}$ can be uniquely determined. Moreover, since  $l_T({\tau}_{E_k},\forall k)\!\!=\!\!\!\sum\limits_{k=1}^{K}
 \!{\tau}_{E_k}$ is also monotonically increasing
  w.r.t. ${\tau}_{E_k}, \forall k$, the optimal solution $\bm{\tau}_{up}^{\star}$ and the optimal $\gamma^{\star}$ in \eqref{eq2916} can be efficiently obtained by the iterative bisection search  (IBS)  \cite{IBS}.
%
%    However,   if there exist  $\grave{\tau}_{E_k}<0$ for some ERs after  one IBS, we have
%  the corresponding optimal ${\tau}_{E_k}^{\star}=0$  for these ERs according to  Theorem~\ref{pop2}, and again the IBS  is applied to find  the optimal $\grave{\tau}_{E_k}$ for the remaining ERs. This process  runs until all obtained  $\grave{\tau}_{E_k}, \forall k$ are
%  non-negative after one IBS.

Similarly to the problem \eqref{eq21}, Theorem~\ref{pop2} also provides a semi-closed-form optimal solution for the problem \eqref{eq27} with a given $\tau_0$. Since the objective function $f_T({\tau}_0)$ is also concave w.r.t. ${\tau}_0$, the Golden section search can still
be applied to find the optimal time splitting ratio $\tau_0^{\star}$ satisfying  ${\tau}_0^{\star}\!=\!arg \underset{0\le {\tau}_0\le 1}{ \max}~f_T({\tau}_0) $.

%Finally, similarly to the problem \eqref{eq21}, for given $\{\bm{\tau}_{up}^{\star}, \widetilde{\bm{W}}_B^{\star},$ $  \widetilde{\bm{P}}_{E_k}^{\star}, \forall k\}$ from Theorem~\ref{pop2}, the objective function $f_T(\tilde{\tau}_0)$ in  \eqref{eq27} is also  concave w.r.t  $\tilde{\tau}_0$, and thus  the Golden section search can still
%  be applied to find the optimal $\tau_0^{\star}$ satisfying  ${\tau}_0^{\star}\!=\!arg \underset{0\le \tilde{\tau}_0\le 1}{ \max}~f_T(\tilde{\tau}_0) $.
%
  \begin{algorithm}[t]
\caption{Optimization of the time splitting ratio $\tau_0$} \label{algorithm_1}
\begin{algorithmic}[1]
\STATE Initialize: $\tau_{min}=0$,
$\tau_{max}=1$ and step $\phi=(\sqrt{5}-1)/2$
\REPEAT
\STATE  Calculate $\tau_1\!=\!\tau_{max}\!-\!(\tau_{max}\!-\!\tau_{min
})\phi $ and $\tau_2\!=\!\tau_{min}\!+\!(\tau_{max}\!-\!\tau_{min
})\phi$.
\STATE  Obtain $f_{S/T}(\tau_1)$
and $f_{S/T}(\tau_2)$ from Algorithm 2.
\STATE  \textbf{if} {$f_{S/T}(\tau_1)\!\!>\!\!f_{S/T}(\tau_2)$, set  $\tau_{max}\!\!=\!\!\tau_2$.}
\STATE \textbf{else} {set  $\tau_{min}\!\!=\!\!\tau_1$.} %$\bm{\tau}_{up}^{(I_{1})}$ from %$\bm{\tau}_{up}^{(I_{1})}$ from
\UNTIL{$\vert \tau_{max}\!\!-\!\!\tau_{min}\vert\!\!\le \!\!\kappa$, where $\kappa\!\!>\!\!0$ is sufficiently small.}
\RETURN  the optimal $\tau_0^{\star}=(\tau_{max}\!+\!\tau_{min})/2$
\end{algorithmic}
\end{algorithm}
\begin{algorithm}[t]
\caption{Semi-closed-form solutions in {Theorem~\ref{pop1}} and {Theorem~\ref{pop2}}} \label{algorithm_2}
\begin{algorithmic}[1]
\STATE Input:  ${\tau}_0=\tau_1/\tau_2$;  initial $\mu_k^{(0)}$, $ {\bm{P}}_{E_k}^{(0)}\!\!=\!\!\bm{0}, \!\forall k$ and $\bm{\tau}_{up}^{(0)}$;  iteration index $i\!=\!0$.
\REPEAT
\STATE  Given $\mu_k^{(i)},\forall k$, apply the bisection search to  \eqref{eq254} to find $\beta^{(i)}$, then  $\bm{W}_B^{(i)}$ is obtained  from  \eqref{eq250}.
\STATE \textbf{if} {(SDMA-enabled WPSN is considered)}
\STATE Given $\bm{W}_B^{(i)}$, apply the iterative water-filling procedure to obtain ${\bm{P}}_{E_k}^{(i)}$ and $\mu_k^{(i+1)},\forall k$ from \eqref{eq251}$\sim$\eqref{eq253}.
\STATE\textbf{elseif} ({TDMA-enabled WPSN is considered})
\STATE Given $\bm{W}_B^{(i)},\bm{\tau}_{up}^{(i)}$, obtain  ${\bm{P}}_{E_k}^{(i)}$ and $ \mu_k^{(i+1)}, \forall k$ from \eqref{eq2901}$\sim$\eqref{eq291}.
\STATE Given  ${\bm{P}}_{E_k}^{(i)},\! \forall k$, obtain  $\bm{\tau}_{up}^{(i)}$ from  \eqref{eq2916}.
\STATE \textbf{end}
\STATE  Calculate $f_{S/T}^{(i)}({\tau}_0)$ and update $i\!\!=\!\!i\!+\!1$;
\UNTIL  $f_{S/T}^{(i)}(\!{\tau}_0\!)$ converges.
\RETURN  Optimal $\!{\bm{W}}_B^{(\!i\!)}
,\bm{\tau}_{up}^{(\!i\!)},{\bm{P}}_{E_k}^{(\!i\!)}
, \forall k, f_{S/T}^{(\!i\!)}(\!{\tau}_0\!)$.
\end{algorithmic}
\end{algorithm}

\subsection{Summary and discussion}
For clarification, the implementation of the proposed semi-closed-form optimal solutions for both SDMA-enabled and TDMA-enabled uplink sum rate maximization  is summarized as Algorithm 1 \& 2. Specifically, the Golden section search for the optimal time splitting ratio ${\tau}_0^{\star}$ is shown in Algorithm 1, while the semi-closed-form solutions for the SDMA-enabled problem \eqref{eq21} and the TDMA-enabled problem \eqref{eq27} are illustrated in Algorithm 2.
%
%To determine  the semi-closed  optimal solutions derived    in Theorem~\ref{pop1} and Theorem~\ref{pop2} for uplink sum rate maximization of the  SDMA-enabled and TDMA-enabled WPSNs, respectively,   an iteration-based  optimization procedure  is   proposed in our work. Specifically, given $\tau_0$, for  the  SDMA-enabled  problem \eqref{eq6},  an alternating optimization among $\{{\bm{W}}_B,  {\bm{P}}_{E_k}, \mu_k, \beta, \forall k\}$ is  performed according to Theorem~\ref{pop1}, whereas for the  TDMA-enabled  problem \eqref{eq12},   an  alternating optimization among $\{\bm{\tau}_{up},
% \bm{W}_B,
% {\bm{P}}_{E_k}, \mu_k, \beta, \gamma, \forall k \}$  is  considered according to Theorem~\ref{pop2}.
%Then  for both  problems \eqref{eq6} and \eqref{eq12}, the optimal $\tau_0$  realizing  the maximum uplink sum rate can be  determined
%  by   the Golden section search.   The   proposed iteration-based
%optimization procedure is presented in Algorithm 1  and Algorithm 2.

Since the problems  \eqref{eq6} and \eqref{eq12} are jointly concave on $\{{\tau}_0,{\bm{W}}_B,  {\bm{P}}_{E_k}, \forall k\}$ and $\{\bm{\tau},{\bm{W}}_B,  {\bm{P}}_{E_k}, \forall k\}$, respectively, their KKT conditions are sufficient and necessary for the globally optimal solutions no matter whether $\tau_0$ is given. Therefore the semi-closed-form solutions derived based on KKT conditions in Theorem \ref{pop1} and \ref{pop2} are globally optimal. In other words, Algorithm 2 is guaranteed to converge to the global optimal solutions for both  SDMA-enabled problem \eqref{eq21} and the TDMA-enabled problem \eqref{eq27}. Meanwhile, with the concavity of the objective functions $f_{S/T}({\tau}_0)$  w.r.t. ${\tau}_0$, the global optimality of the obtained ${\tau}_0^{\star}$  from  Algorithm 1 is also assured. As a result, our proposed algorithm is guaranteed to converge to the  globally optimal solutions for both  SDMA-enabled and TDMA-enabled uplink sum rate maximizations.

Although iterative calculations are required in our proposed algorithm, the complexity of our proposed algorithm with semi-closed-form solutions is still  lower than that of the traditional numerical algorithms for   standard convex problems, such as, the interior point method. Here, we take the SDMA-enabled WPSN as an example for complexity analysis. As shown in Algorithm 1\&2, Golden section search,   bisection search, iterative water-filling  procedure and SVD/EVD operations   are clearly  involved. By denoting the  converged numbers of iterations for the first three processes  as $I_G$, $I_{B}$ and  $I_W$, respectively,  the complexity of our proposed algorithm can be expressed as $I_GI_{semi}(I_{W}\mathcal{O}(KN_U^3)\!+\!I_B\!+\!\mathcal{O}(N_B^3))$, where $I_{semi}$ denotes the number of iterations in Algorithm 2 and  $\mathcal{O}(n^3)$ denotes the complexity of  SVD/EVD operations.
It is well-known that   Golden section search and the  bisection search are  efficient with small numbers of iterations. Mathematically, we have $I_G\!\!=\!\!\log_2(\frac{1}{\epsilon})$ and $I_B\!\!=\!\!\log_2(\frac{\beta^{max}}{\epsilon})$, where $\epsilon$ denotes the  search accuracy \cite{golden}. In addition, the iterative water-filling procedure usually converges fast with small $I_{W}$ \cite{water} and the number of iterations $I_{semi}$ of Algorithm 2 is also small as shown in the simulations. Nevertheless, by referring  to \cite{complexity}  and recalling the  original jointly concave problem \eqref{eq14}, which is like a SDP problem due to  the positive semidefinite optimization variables, the corresponding complexity of  the numerical interior point method is  $\mathcal{O}(K(KN_U+N_B)^{3.5}+
K^2(KN_U+N_B)^{2.5}+ K^3 (KN_U+N_B)^{0.5})\log(1/\epsilon)$  \cite{complexity}. Clearly, our proposed algorithm has much lower complexity than the numerical convex  algorithm, not to mention the meaningful insights found from our semi-closed-form solutions.

Last but not least, due to the additional  practical consideration of the downlink rate constraint in partially WPSNs, our joint downlink and uplink  beamforming designs as well as  time splitting  are remarkably different  from that in most of the existing works. Particularly, our SDMA-enabled design in fact reduces to that in \cite{RWPCN30} when the downlink rate constraint is not tight. While for the TDMA-enabled WPSNs, due to  another  additional involvement of uplink time allocation, the joint beamforming design and time splitting is more challenging and  rarely discussed in the literature. However,  our proposed semi-closed-form design provides the optimal solutions with low complexity for this challenging problem.

\section{Simulation Results and Discussions} \label{simulation}

In this section, simulation results are provided to demonstrate the effectiveness of the proposed uplink sum rate maximizations for the SDMA-enabled  and TDMA-enabled  MIMO WPSNs, respectively. Unless otherwise stated,  we consider one H-AP with $N_B=6$ antennas,  $K=3$ ERs each with $N_U=3$ antennas and one IR with single antenna. Besides, the H-AP is assumed to locate at the origin $(0,0)$m, while the three ERs and one IR are randomly located  within a circle with radius $10$m.
All wireless  channels are generated according to Rayleigh distribution $\mathcal{CN}(\bm{0}, 10^{-3}d^{-\alpha}\bm{I})$, where $d$ denotes the actual  distance  between the H-AP and ERs/IR,  and $\alpha=3$ is the pathloss exponent.  In addition, the received Gaussian noise variance is set to be $\sigma_n^2=-100$dBm.  The maximum  downlink and uplink transmit  powers of H-AP and IR are defined as $P_B=20$dBm and $P_I=5$dBm, respectively. Here we adopt two benchmark designs for comparisons. One is the maximum downlink rate (MDR) based beamforming scheme where the downlink beamforming is fixed as $\bm{W}_B^{up}\!\!=\!\!\frac{P_B}
{\Vert\bm{h}_{I_R}\Vert^2}\bm{h}_{I_R}\bm{h}_{I_R}^H$, while the other is the proposed uplink sum rate optimization with fixed time allocation $\tau_0=0.5$, which indicates equal time splitting for downlink  and uplink transmissions.

The optimal time splitting $\tau_0$ and the optimal downlink and uplink beamformings are obtained through Golden section search and iterative optimization procedure, respectively,
as shown in Algorithms 1 \& 2. We first investigate the convergence of the proposed  iterative optimization
procedure in Algorithm 2, where  $\tau_0=0.5$  and
$R_I=R_I^0$ satisfying $ R^{mi}< R_I^0
\le R^{up} $ are defined. In particular, two initial values of  $ \bm{W}_B$ are adopted  as follows:
\begin{align}\label{ini}
  &\bm{W}_B^{1}\!=\!\!\frac{P_B}{N_B}\bm{I}_{N_B};~~\bm{W}_B^2\!= \!\bm{a}{\bm{a}^H}, \nonumber\\
  & \vert\bm{h}_{I_R}^H\bm{a}\vert^2
  \!\!=\!\!\sigma_n^2(2^{\frac{R_I^0}{\tau_0}}\!\!-\!1),~~
  \Vert\bm{a}\Vert^2=P_B.
\end{align}
The results are shown in Fig. \ref{Fig1}. It is clear that using both initial $\bm{W}_B$ in \eqref{ini}, the proposed iterative optimization procedure
converges to  the maximum uplink sum rate within 8 iterations for both  SDMA-enabled and TDMA-enabled WPSNs with fixed time splitting $\tau_0=0.5$. Then Fig.~\ref{Fig11} shows  the convergence of  Golden  section search in Algorithm 1  for finding the optimal $\tau_0$. It is also clearly seen that for both WPSNs, the achievable maximum uplink sum rate converges within  5 iterations.
\begin{figure}[t]
\centering
\includegraphics[width=2.42in]{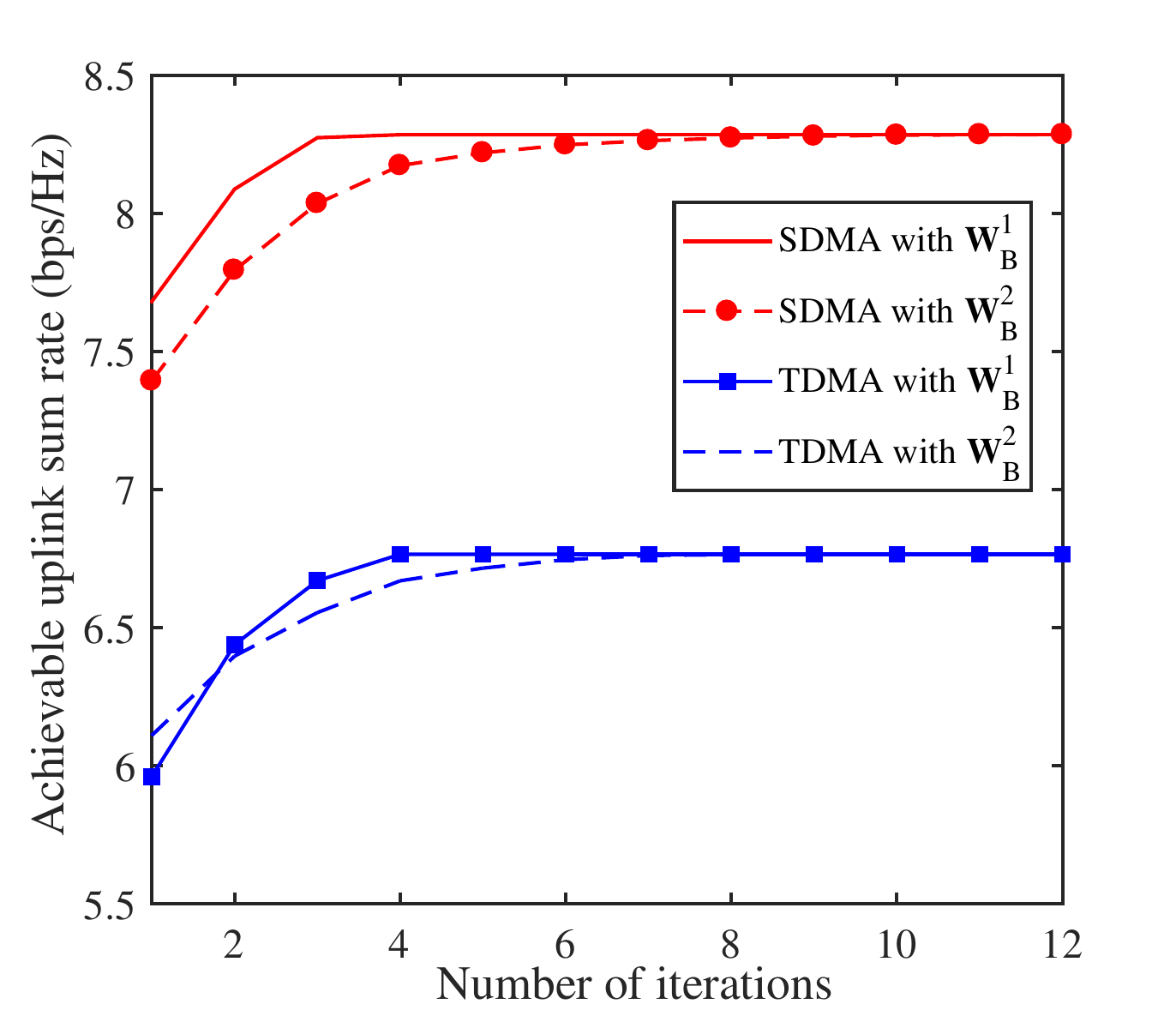}
\caption{Convergence of   Algorithm 2 for both  WPSNs, where
$R_I\!=\!R_I^0$  and  $\tau_0\!=\!0.5$.} %
  \label{Fig1}
\end{figure}
\begin{figure}[t]
\centering
\includegraphics[width=2.42in]{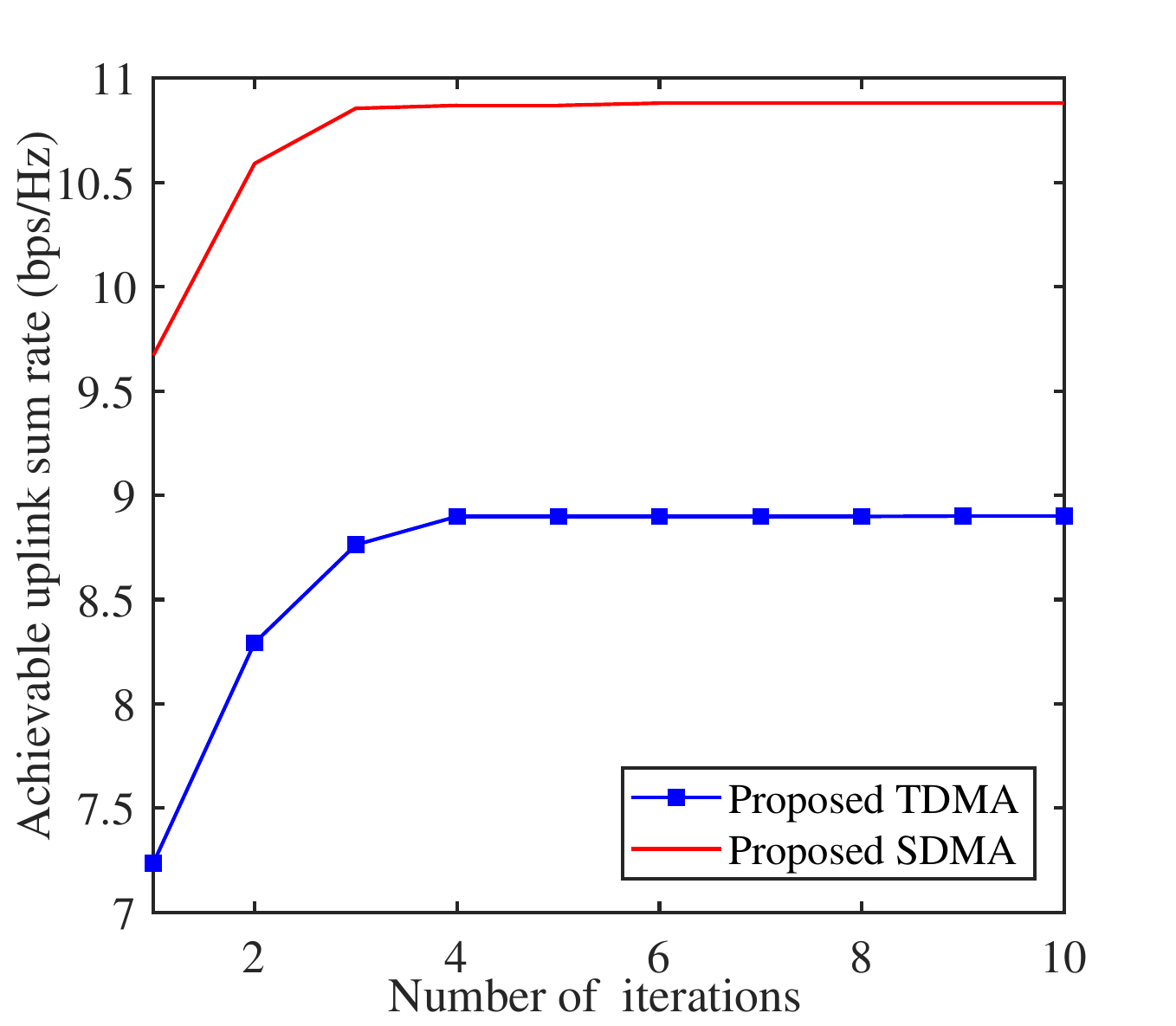}
\caption{ Convergence of   Algorithm 1 for both  WPSNs, where
$R_I\!\!=\!\!R_I^0$.} %
  \label{Fig11}
\end{figure}
\begin{figure}[t]
\centering
\includegraphics[width=2.34in]{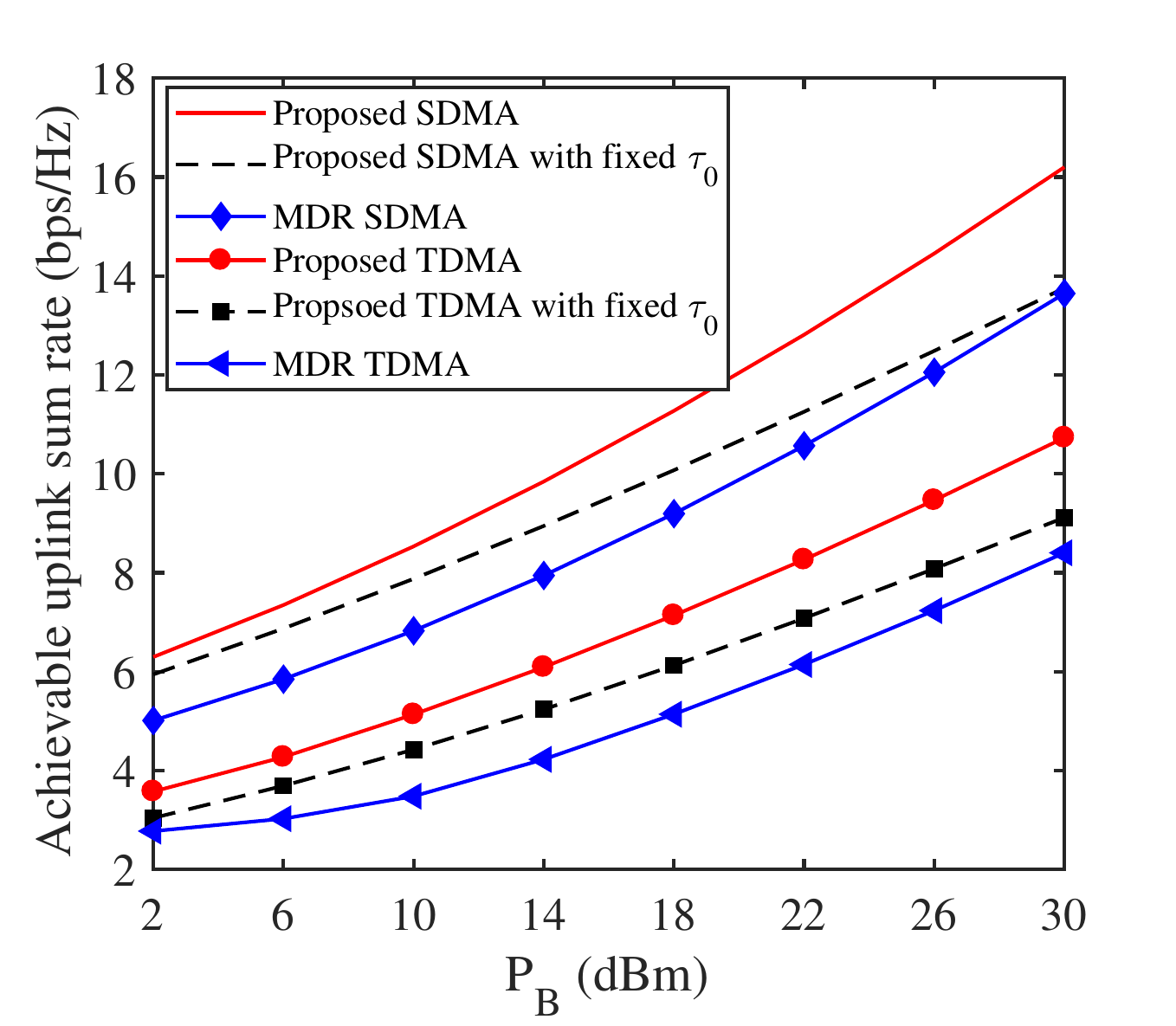}
\caption{Achievable uplink sum rate versus  H-AP transmit power $P_B$ for both  WPSNs, where $R_I\!=\!\frac{R^{mi}}{2}$.} %
  \label{Fig3}
\end{figure}
\begin{figure}[t]
\centering
\includegraphics[width=2.35in]{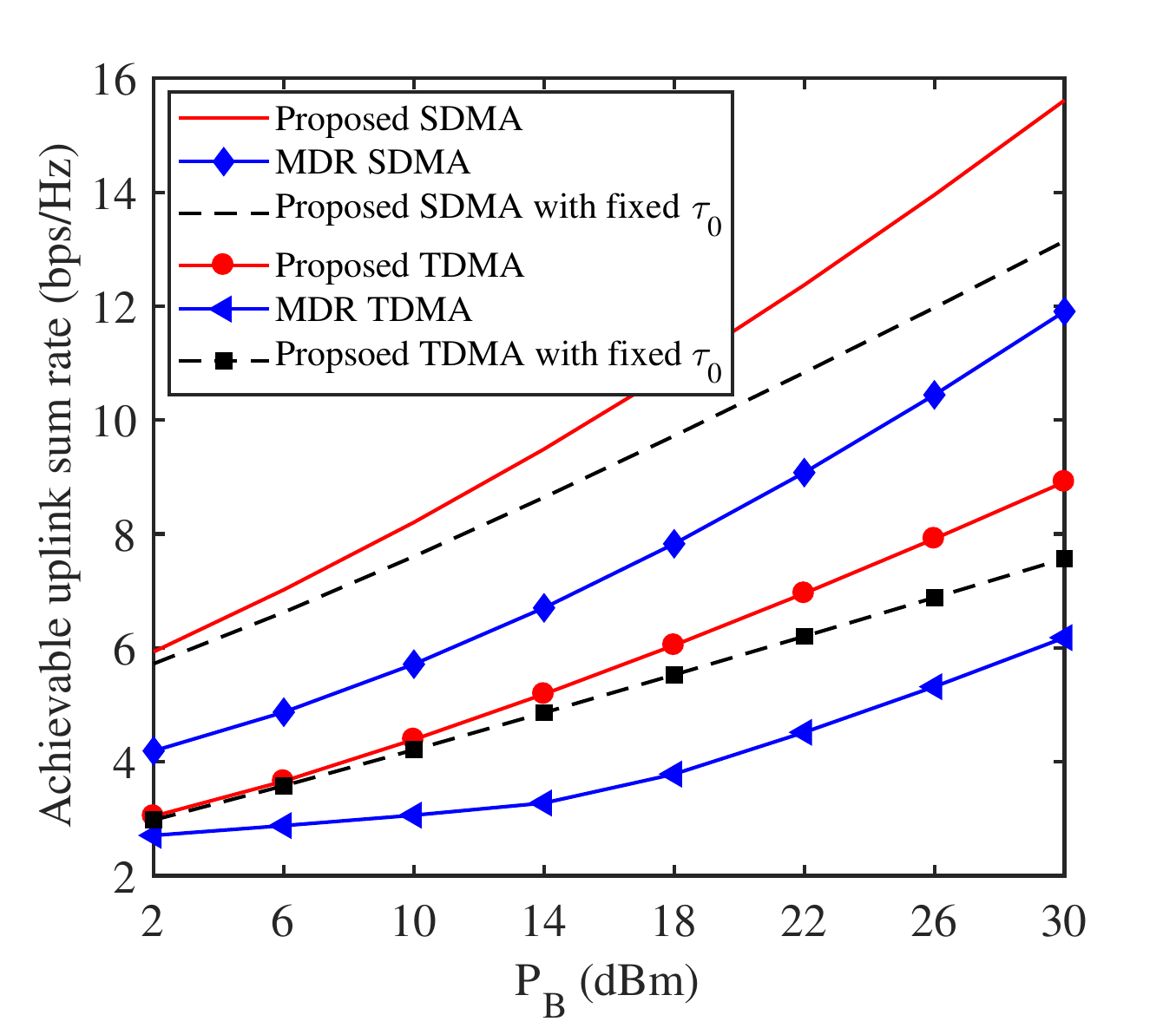}
\caption{Achievable uplink sum rate versus  H-AP transmit power $P_B$ for both  WPSNs, where $R_I\!=R_I^1$ and $ R^{mi}\!<\! R_I^1
\!\le \!0.5R^{up} $.} %
  \label{Fig32}
\end{figure}

In Fig.~\ref{Fig3}, the achievable  uplink sum rate versus  the H-AP transmit power $P_B$ is studied for all three  considered schemes  in both SDMA-enabled and TDMA-enabled WPSNs. Here,  $R_I=\frac{R^{mi}}{2}$  is assumed, which means that  the downlink rate constraint is inactive for  both SDMA-enabled and TDMA-enabled  uplink sum rate maximization. Under this setting, the proposed SDMA-enabled  uplink sum rate maximization  is actually  reduced to the sum throughput maximization for fully  WPSNs in \cite{RWPCN30}. It is observed from Fig.~\ref{Fig3} that the  achievable uplink sum rate  increases with the H-AP transmit power $P_B$ for   all considered schemes.  Besides, it is clear  that the SDMA based scheme generally achieves higher uplink  sum rate than the corresponding  TDMA based scheme due to  the simultaneous uplink transmission  from all the ERs and IR. By comparing the proposed SDMA/TDMA scheme with the corresponding  counterpart with fixed $\tau_0=0.5$,
we  also  find that  the   time splitting optimization plays an important role in improving uplink sum rate  for both SDMA-enabled and TDMA-enabled WPSNs. Additionally, the MDR SDMA/TDMA  scheme performs worst among three SDMA/TDMA schemes due to the fact that most H-AP transmit power is utilized for downlink information transmission, which thus results in  limited harvested energy  at ERs  for uplink transmission.

We then extend this simulation to the case with the downlink threshold  $R_I\!=\!R_I^1$ where $ R^{mi}\!<\! R_I^1
\!\le \!0.5R^{up} $ is defined to guarantee the feasibility of the proposed SDMA/TDMA scheme with fixed $\tau_0=0.5$ and the results are shown in  Fig.~\ref{Fig32}. Notice that under $ R^{mi}\!<\! R_I^1
\!\le \!0.5R^{up} $, the downlink rate constraint is tight and will affect the achievable sum rate and the optimal solutions.
%Note that we have discarded the possible cases of $ R^{mi}>0.5R^{up}$ in Fig.~\ref{Fig32}.
Similar results to that in Fig.~\ref{Fig3} can also  be observed from Fig.~\ref{Fig32}. Moreover, we also find that for all considered schemes,  the  achievable uplink sum rate in Fig.~\ref{Fig32} is naturally lower than that in Fig.~\ref{Fig3} under a given $P_B$, since the uplink sum rate maximization for both WPSNs is constrained by a higher downlink rate threshold $R_I=R_I^1$ in Fig.~\ref{Fig32}.

In Fig.~\ref{Fig4}, we illustrate  the achievable uplink sum  rate   as
a function of the downlink rate threshold $R_I$ for both SDMA-enabled and TDMA-enabled WPSNs.  It is firstly  seen that when $0\le R_I\le R_I^{mi}$, the achievable uplink sum rate is flat, which  indicates  that   the downlink rate constraint  actually  has no   influence  on the achievable   uplink sum rate. Then  when $R_I^{mi}< R_I\le R_I^{up}$, the achievable uplink sum rate decreases with the increase of $R_I$ implying the downlink rate constraint becomes tight. Moreover, we also find that for all three considered schemes, the SDMA-enabled WPSN  has a larger  uplink-downlink rate region   compared to  the TDMA-enabled WPSN. Taking the SDMA-enabled WPSN as an example, it is readily found  that the proposed SDMA scheme with the optimized  $\tau_0$ still performs best,  whereas the MDR SDMA scheme firstly  realizes the lowest uplink sum rate and then  approaches to the proposed SDMA scheme  with the rise of $R_I$, since when $R_I \rightarrow R_I^{up}$, the optimal downlink beamforming $\bm{W}_B^{\star}$ of the proposed SDMA scheme  also tends to be $\bm{W}_B^{up}\!\!=\!\!\frac{P_B}
{\Vert\bm{h}_{I_R}\Vert^2}\bm{h}_{I_R}\bm{h}_{I_R}^H$. As  for  the proposed  SDMA scheme with fixed  $\tau_0=0.5$, the  achievable maximum downlink rate  is easily found to be $0.5R_I^{up}$. The above results also hold for the TDMA-enabled WPSN.

\begin{figure}[t]
\centering
\includegraphics[width=2.4in]{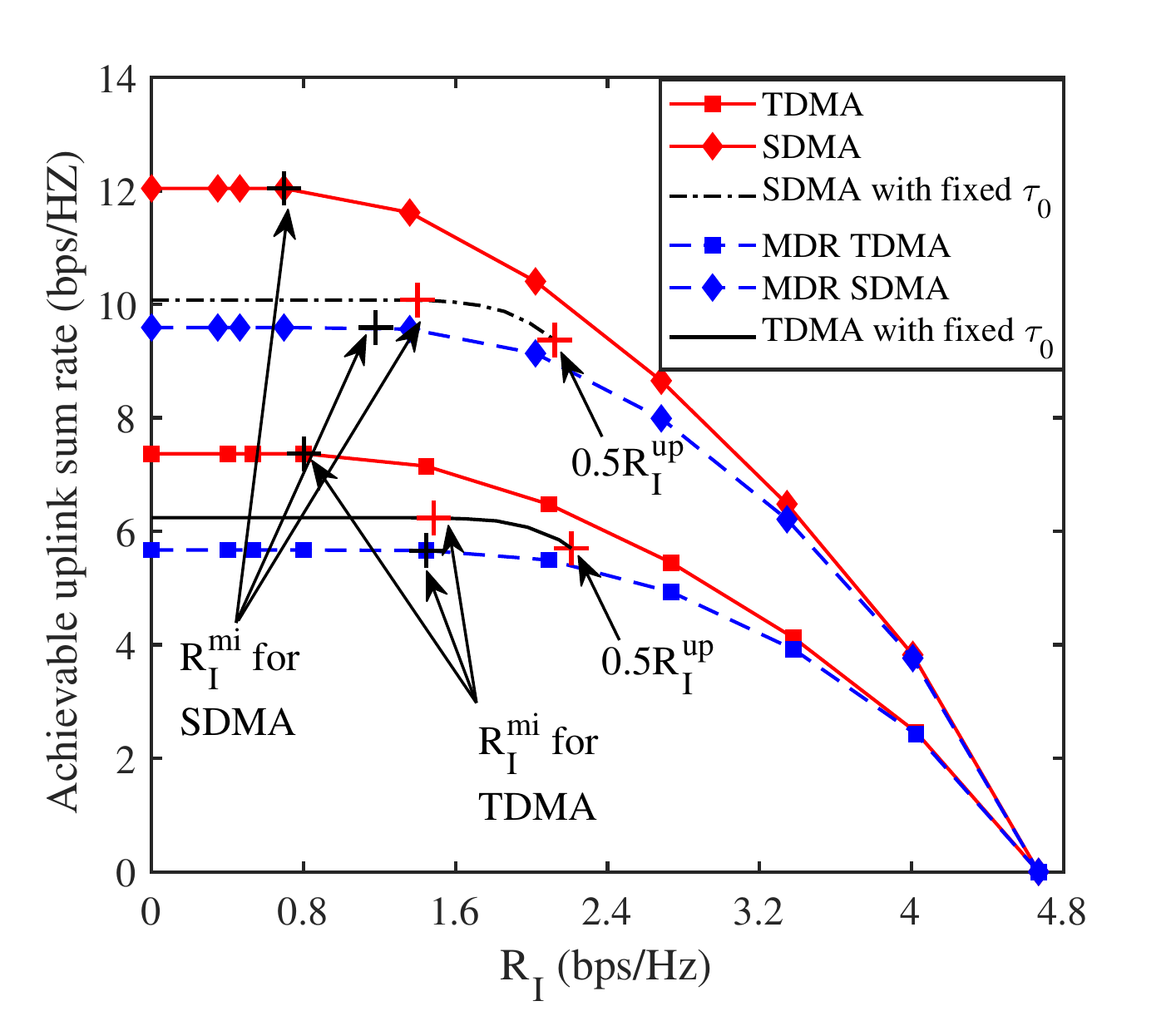}
\caption{Achievable uplink sum rate versus downlink rate threshold $R_I$ for both WPSNs. }
\label{Fig4}
\end{figure}
\begin{figure}[t]
\centering
\includegraphics[width=2.38in]{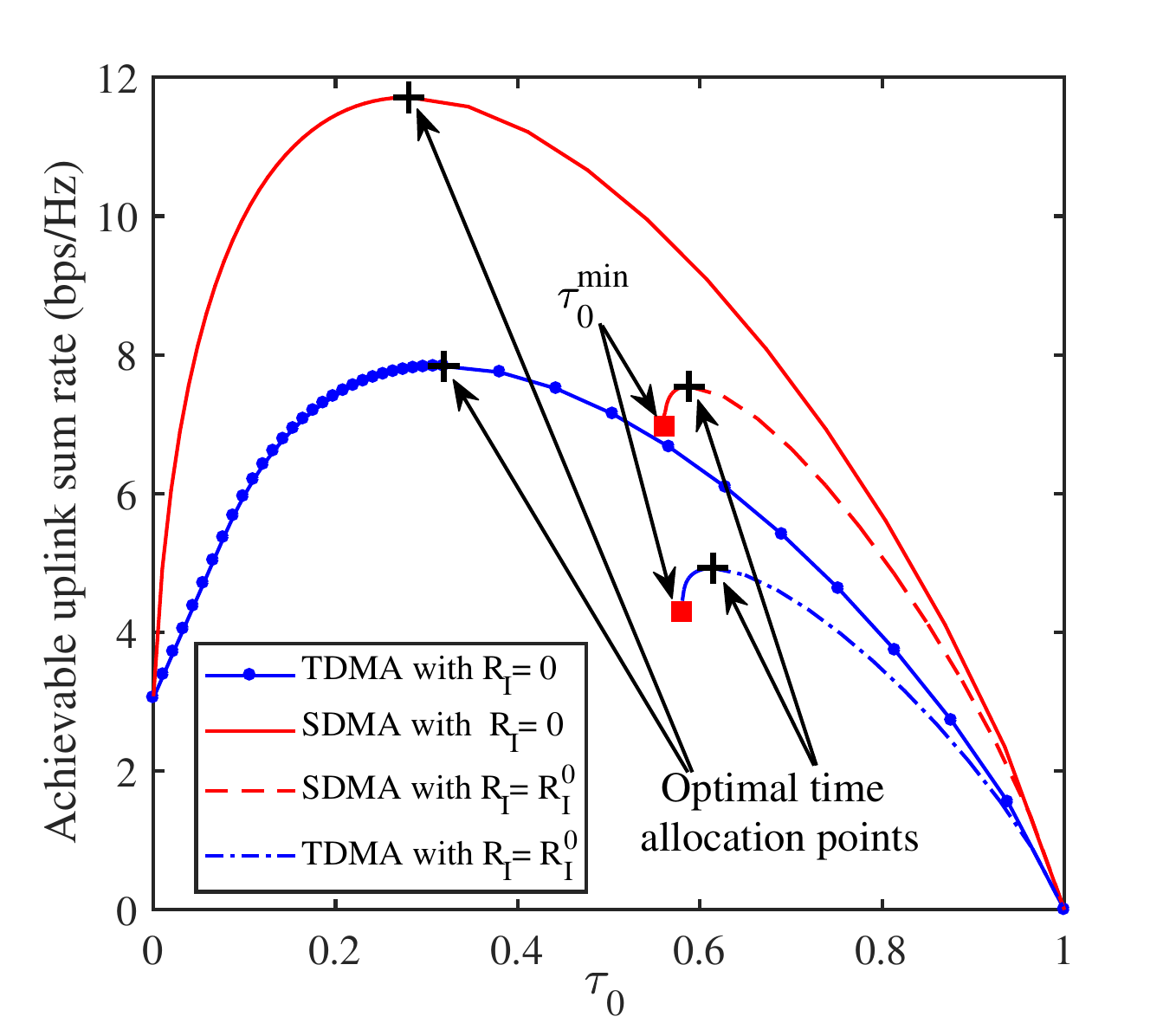}
\caption{ Achievable uplink sum rate versus downlink time duration  $\tau_0$ for both WPSNs, where both $R_I=0$ and $R_I=R_I^0$ are considered.} %
  \label{Fig5}
\end{figure}

 Fig.~\ref{Fig5} finally depicts  the achievable uplink sum  rate
 as a function of the downlink time duration $\tau_0$ for both SDMA-enabled  and TDMA-enabled WPSNs. Two downlink rate thresholds  $R_I\!=\!0$ and $R_I\!=\!R_I^0$ satisfying $ R^{mi}< R_I^0
\le R^{up} $   are  considered, respectively. It is clear from  Fig.~\ref{Fig5} that for both WPSNs, the achievable uplink sum rate $f_S({\tau}_0)/f_T({\tau}_0)$ is indeed  concave  w.r.t.  $\tau_0$ under both thresholds $R_I$. Particularly, when $R_I\!=\!0$, both WPSNs achieve the same uplink sum rate at the point $\tau_0\!=\!0$ since only transmission from the energy-stable IR happens in the uplink.
 However, in the case of  $R_I=R_I^0$,   the minimum downlink time duration $\tau_0^{min}$
is required to satisfy  the  downlink rate constraints of both WPSNs.

\section{Conclusion}
In this paper,  we  investigated the uplink sum rate maximization for both  SDMA-enabled  and TDMA-enabled partially WPSNs.  Different from most existing WPSNs related works, the downlink simultaneous wireless information and power transfer was considered and downlink rate constraint was taken into account in our optimal design. After analyzing the downlink rate constraint and converting the original non-convex uplink sum rate maximization problems into concave ones, semi-closed-form optimal solutions for downlink beamforming, uplink beamforming and time allocation were proposed. Global optimality was proved and low complexity of the proposed optimal solutions were justified. Moreover, from the analysis we found that downlink rate constraint played a significant role and required special care in the optimal design. Finally, numerical simulations verified  the excellent performance of  the  proposed uplink sum rate optimization schemes for both WPSNs.

  %we   studied  the joint  optimization of  the time splitting, the energy beamforming  and the information beamforming for uplink sum rate maximization of both   WPSNs,  subject  to the downlink rate constraint. For both WPCNs, the formulated optimization problems can be equivalently converted  into concave forms via  variables replacement, and further  the semi-closed optimal solutions were  explored correspondingly.  Meanwhile, an  iteration-based  optimization procedure  was proposed, which can be proved to converge to the globally optimal solutions for uplink sum rate maximization  of both WPSNs.  Finally, numerical simulations   verified  the excellent performance of  the  proposed    uplink sum rate optimization schemes for both WPSNs.
\vspace{-3mm}
\begin{appendices}
\section{ }
We  can  prove    Lemma~\ref{lemm0}   by  contradiction  as follows.  Firstly, given a downlink rate threshold $R_I^1$ satisfying $R_I^{mi}< R_I^1\le R_I^{up}$, we denote the corresponding maximum objective value of the problem  \eqref{eq6} as $f_{obj,R_I^1}^{\star}(\tau_{0,1}^{\star},\bm{P}_{E_k,1}^{\star},\bm{W}_{B,1}^{\star}, \forall k)$, where $\tau_{0,1}^{\star}$, $\bm{W}_{B,1}^{\star}$ and $\bm{P}_{E_k, 1}^{\star},\forall k$  are the optimal solutions to the problem   \eqref{eq6} with the downlink rate threshold $R_I=R_I^1$.
Meanwhile, we  assume  $\tau_0\log(1\!+
 \!\sigma_n^{-2}\bm{h}_{I_R}^H\bm{W}_{B,1}^{\star}
 \bm{h}_{I_R})>R_I^1$.  Based on this assumption,
it is readily concluded that when another  downlink
 threshold ${R}_I^2$ satisfying  ${R}_I^2=\tau_0\log(1\!+
 \!\sigma_n^{-2}\bm{h}_{I_R}^H\bm{W}_{B,1}^{\star}
 \bm{h}_{I_R})> R_I^{1}$ is applied to the problem  \eqref{eq6},  the previous obtained solution $\bm{\mathcal{Q}}_1\!\!=\!\!\{\tau_{0,1}^{\star},{\bm{W}}_{B,1}^{\star},\bm{P}_{E_k, 1}^{\star},\forall k\}$  actually becomes a  feasible solution for the problem \eqref{eq6} with  $R_I=
{R}_I^2$, since all the constraints are satisfied. So  we  have
 \begin{align}\label{ea1}\vspace{-2mm}
   f_{obj,R_I^1}^{\star}(\bm{\mathcal{Q}}_1) \!\le \! f_{obj,R_I^2}^{\star}(\bm{\mathcal{Q}}_2)
 \end{align} where $\bm{\mathcal{Q}}_2\!\!=\!\!\{\tau_{0,2}^{\star}$,$\bm{W}_{B,2}^{\star}$ $\bm{P}_{E_k, 2}^{\star},\forall k\}$  is the corresponding optimal solution to the problem \eqref{eq6} with ${R}_I={R}_I^2$. On the other hand, since ${R}_I^2> R_I^{1}$, the  set of feasible solutions  of the problem \eqref{eq6} with ${R}_I={R}_I^2$  becomes  smaller than that with ${R}_I={R}_I^1$, thus  we have   \begin{align}\label{ea3}\vspace{-2mm}
   f_{obj,R_I^1}^{\star}(\bm{\mathcal{Q}}_1)\! \ge \! f_{obj,R_I^2}^{\star}( \bm{\mathcal{Q}}_2).
   \end{align}
   By combining  \eqref{ea1} and \eqref{ea3}, it is readily concluded  that
    \begin{align}\label{ea4} \vspace{-2mm} f_{obj,R_I^1}^{\star}(\bm{\mathcal{Q}}_1)=
    f_{obj,R_I^2}^{\star}(\bm{\mathcal{Q}}_2).
   \end{align}

 Similarly, for an arbitrary threshold ${R}_I\in [R_I^{1}, R_I^{2}]$, the same   maximum objective value of the problem \eqref{eq6} can also be obtained, which implies that the downlink rate constraint actually  does not affect the problem \eqref{eq6} and thus can be ignored. As discussed in Section III. B, this happens  only when $ R_I^{2}\le R_I^{mi}$, which contradicts with the original assumption of $R_I^{mi}< R_I^1 < R_I^{2}\le R_I^{up}$. Therefore, the initial assumption is invalid,  we must have the optimal downlink beamforming located at the boundary, i.e., $\tau_0\log(1\!+
 \!\sigma_n^{-2}\bm{h}_{I_R}^H\bm{W}_{B}^{\star}
 \bm{h}_{I_R})=R_I$  for the  problem \eqref{eq6}  when $R_I^{mi}< R_I\le R_I^{up}$. Secondly, since $R_I^{mi}$ can be considered to be a critical point at which the downlink rate constraint becomes tight, we finally conclude that the  downlink rate constraint is tight when $R_I^{mi}\le R_I\le R_I^{up}$.  This completes the proof.
\section{ }
Firstly, we consider the special case of  $R_I=R_I^{up}$ for the  problem \eqref{eq21}, in which  both the constraints $\widetilde{\text{CR}}1$ and   $\widetilde{\text{CR}}2$  are  not strictly feasible (must be tight) by recalling the problem \eqref{eq20}.  In this case,
 only one feasible solution of $\bm{W}_B^{up}=\frac{P_B}{\Vert\bm{h}_{IR}\Vert^2}\bm{h}_{IR}\bm{h}_{IR}^H$  exists for the  problem \eqref{eq21}, so it  is also  globally optimal. Meanwhile, the non-negative dual variables $\lambda^{\star}$ and $\beta^{\star}$ corresponding to constraints
 $\widetilde{\text{CR}}1$ and   $\widetilde{\text{CR}}2$ can be set randomly since they are irrelevant to $\bm{W}_B^{up}$. For the following analysis,  we adopt  $\lambda^{\star}\!=\!0$ and    $ \beta^{\star}\!=\!+\infty$ when  $R_I\!=\!R_I^{up}$. In the sequel, we mainly consider the case of $0\le R_I<R_I^{up}$ in which the problem \eqref{eq21} is strictly feasible and semi-closed optimal
  solutions of $\{
\bm{W}_B^{\star}, {\bm{P}}_{E_k}^{\star}, \forall k\}$ can be found.
Based on  the
definitions $\widetilde{\bm{W}}_B^{\star}\!\!=\!\!
\tau_0{\bm{W}}_B^{\star}$,  $\widetilde{\bm{P}}_{E_k}^{\star}
=(1\!-\!\tau_0) \bm{P}_{E_k}^{\star}, \forall k \!\in \!\mathcal{K}$ and the Lagrangian function in \eqref{eq22},
the Karush-Kuhn-Tucker (KKT)  conditions of the  problem \eqref{eq21} w.r.t.
$\bm{W}_B^{\star}$  and ${\bm{P}}_{E_k}^{\star}$ can be written as
\begin{subequations}
\begin{align}
&\widetilde{\bm{H}}^{\star}+\bm{Z}_0^{\star}=
\bm{0}_{N_B}, \bm{Z}_0^{\star}\bm{W}_B^{\star}=\bm{0},~\bm{Z}_0^{\star}
\succeq \bm{0},\label{eq39}\\
&\bm{G}_{E_k}^H\bm{M}_{\mathcal{K}\setminus k}^{{\star}\frac{1}{2}}( \bm{I}_{N_B}\!\!+\!\sigma_n
^{\!-\!2}\bm{M}_{ \mathcal{K}\setminus k}^{{\star} \frac{1}{2}}\bm{G}_{E_k}{\bm{P}}_{E_k}^{\star}
\bm{G}_{E_k}^H\bm{M}_{ \mathcal{K}\setminus k}^{{\star}\frac{1}{2}}
)^{\!-\!1}\nonumber\\
&~~~~~~~\times \bm{M}_{\mathcal{K}\setminus k}^{{\star} \frac{1}{2}}\bm{G}_{E_k}\!=\!{\sigma_n^{2}}{\ln 2}(\mu_k^{\star}\bm{I}_{N_U}
  \!-\!\bm{Z}_k^{\star}),\label{eq392}\\
  &\bm{Z}_k{\bm{P}}_{E_k}^{\star}=\bm{0},~\bm{Z}_k^{\star}\succeq\bm{0}, \forall k\label{eq393}\\
&\lambda^{\star}(\text{tr}(\bm{W}_B^{\star})-P_B)=0, \label{eq40}\\
&\beta^{\star}(\text{tr}(\bm{h}_{I_R}^H{\bm{W}}_B^{\star}
 \bm{h}_{I_R})-\sigma_n^2(2^{\frac{R_I}{{\tau}_0}}-1))=0, \label{eq41}\\
&\mu_k^{\star}((1\!\!-\!\!\tau_0)\text{tr}({\bm{P}}_{E_k}^{\star})- {\tau}_0 \varepsilon_k\text{tr}
 (\bm{H}_{E_k}\bm{W}_B^{\star}\bm{H}_{E_k}^H))=0, \forall k, \label{eq42}
\end{align}
\end{subequations}
The optimal solutions of $\{\bm{W}_B^{\star}, \lambda^{\star},\beta^{\star}\}$ and $\{{\bm{P}}_{E_k}^{\star}, \forall k, \mu_k^{\star}\}$ can then be derived individually based on above KKT conditions as follows.

\subsection{The optimal $\bm{W}_B^{\star}$, $\lambda^{\star}$ and $\beta^{\star}$}
Recalling that $\widetilde{\bm{H}}^{\star}\!\!=\!\!{\bm{H}}^{\star}\!-\!
\lambda^{\star}\bm{I}_{N_B}$ in \eqref{eq23}, we firstly define  the eigenvalue decomposition (EVD) $ {\bm{H}}^{\star}\!\!=\!\!
  {\bm{U}}_{{\bm{H}}}{\bm{\Lambda
  }}_{{\bm{H}}}{\bm{U}}_{{\bm{H}}}^H,$
  where  the  maximum eigenvalue  $\lambda_{{\bm{H}},\max}$  in the diagonal matrix ${\bm{\Lambda
  }}_{{\bm{H}}}$   satisfies $\lambda_{{\bm{H}},\max}\!\!>\!\!0$
  since  ${\bm{H}}^{\star} \!\succeq\!\bm{0}$ according to its definition in \eqref{eq23}. Then $\widetilde{\bm{H}}^{\star}$ can be rewritten  as
  $\widetilde{\bm{H}}^{\star}\!\!= \!\!{\bm{U}}_{{\bm{H}}}({\bm{\Lambda
  }}_{{\bm{H}}}\!-\!
  \lambda^{\star}\bm{I}_{N_B}){\bm{U}}_{{\bm{H}}}^H$.  Further based on the  KKT  condition  \eqref{eq39}, we simultaneously have
 $\widetilde{\bm{H}}^{\star}=-\bm{Z}_0^{\star}\preceq \bm{0}$ and $\widetilde{\bm{H}}^{\star}\bm{W}_B^{\star}=
\bm{0}$. Therefore,  $\widetilde{\bm{H}}^{\star}$  must be singular and  thus  the optimal  $\lambda^{\star}$  satisfies $\lambda^{\star}\!=\!\lambda_{{\bm{H}},\max}\!>\!0$. Moreover,  we have $\bm{W}_B^{\star}\!=\!c\bm{u}_{{\bm{H}}}\bm{u}_{{\bm{H}}}^H$, where   $\bm{u}_{{\bm{H}}}$ is the unit-norm  eigenvector of ${\bm{H}}^{\star} $  corresponding to $\lambda_{{\bm{H}},\max}$.
 and $c$ is a  constant. Meanwhile, according to the KKT condition \eqref{eq40}, we have $\text{tr}(\bm{W}_B^{\star})=P_B$ due to $\lambda^{\star}\!>\!0$, so  the constant $c\!=\!P_B$ is derived.  As a result, the  optimal $\bm{W}_B^{\star}$ and $\lambda^{\star}$ are finally given  as  \eqref{eq250} in Theorem~\ref{pop1}, i.e.,
 \begin{align}\label{u0}
(\bm{W}_B^{\star},\lambda^{\star}) \! \! =
\!\!\left\{\!\!\!\begin{array}{ll}
  ( P_B\bm{u}_{{H}}\bm{u}_{{H}}^H,\lambda_{{H}}^{\max}) & 0\!\!\le\! \!R_I \!\!<\!\! R_I^{up}\\
 (\frac{P_B}{\Vert\bm{h}_{IR}\Vert^2}\bm{h}_{IR}\bm{h}_{IR}^H, 0) &R_I\!\!=\!\!R_I^{up}
\end{array}\right..
 \end{align}

 As for the optimal $\beta^{\star}$, we firstly observe that when $ 0\le R_I < R_I^{mi}$, as discussed in Section III. B, the downlink rate constraint is inactive, i.e., ${f}_{R}\left(\bm{W}_B^{\star}\right)\!=\!\tau_0\log(1\!+\!\sigma_n^{-2}\bm{h}_{I_R}^H\bm{W}_B^{\star}
 \bm{h}_{I_R})>{ R_{I}}$. It is equivalent to $\text{tr}(\bm{h}_{I_R}^H{\bm{W}}_B^{\star}
 \bm{h}_{I_R})>\sigma_n^2(2^{\frac{R_I}{{\tau}_0}}-1)$. Based on  the KKT condition \eqref{eq41}, it directly yields that $\beta^{\star}\!\!=\!\!0$ when $ 0\le R_I < R_I^{mi}$. On the other hand, when $  R_I^{mi} \!< \!R_I\! < \!R_I^{up}$, recalling Lemma~\ref{lemm0}, the downlink rate constraint is tight, i.e.,  ${f}_{R}\left(\bm{W}_B^{\star}\right)\!=\!
 {\tau}_0\log(1\!+\!\sigma_n^{-2}\bm{h}_{I_R}^H
\bm{W}_B^{\star}
 \bm{h}_{I_R})={ R_{I}}$.  Thus the optimal $\beta^{\star}$ which affects the optimal $\bm{W}_B^{\star}$ implicitly through ${\bm{H}}^{\star}$ should satisfy the nonlinear equation ${f}_{R}\left(\bm{W}_B^{\star}\right)\!=\!{ R_{I}}$. As shown in Appendix A,   $R_I^{mi}$ is a critical point at which the inactive downlink rate
constraint becomes tight, so we can still adopt $\beta^{\star}\!\!=\!\!0$ for $R_I=R_I^{mi}$. Now combining the above results with that for the special case of $R_I=R_I^{up}$ discussed in the beginning, we can summarize the optimal  $\beta^{\star}$ as
 \begin{align} \label{beta}
 \beta^{\star}\!\!= \!\!\left\{\!\!\!\begin{array}{ll}
  0& 0\!\le\! R_I \!\le\! R_I^{mi} \\
{\arg} ~\{{f}_{R}\left(\bm{W}_B^{\star}\right)\!\!=\!\!{ R_{I}}\} &R_I^{mi}\!\!<\! R_I\! \!< \!\!R_I^{up}\\
+\infty & R_I\!\!=\!\!R_I^{up}
\end{array}\right.,
\end{align}
 as shown  in \eqref{eq254}.
\subsection{The optimal $ {\bm{P}}_{E_k}^{\star}$ and $\mu_k^{\star}, \forall k$}
To derive the optimal ${\bm{P}}_{E_k}^{\star}, \forall k$, all related KKT conditions  \eqref{eq392}, \eqref{eq393} and \eqref{eq42} should be jointly considered. It is noted that these KKT conditions  are with the same structure as that of the conventional MIMO rate maximization problem subject to  the total  transmit power  constraint, as shown in \cite{RR2}. Therefore, the optimal structure of ${\bm{P}}_{E_k}^{\star}, \forall k$ derived in \cite{RR2} is also applicable to our problem. Specifically,
based on the SVD ${\bm{M}}_{\mathcal{K}\setminus k}^{ \star \frac{1}{2} }\bm{G}_{E_k} \!\! =\!\!
   \bm{U}_{M_{\mathcal{K}\setminus k}}^{\star}\bm{\Lambda}_{M_{\mathcal{K}\setminus k}}^{\star}
   \bm{V}_{M_{\mathcal{K}\setminus k}}^{\star H}, \forall k$, the optimal ${\bm{P}}_{E_k}^{\star}$ is given by ${\bm{P}}_{E_k}^{\star}=
\bm{V}_{M_{\mathcal{K}\setminus k}}^{\star} {\bm{\Lambda}}_{P_{E_k}}\bm{V}_{M_{\mathcal{K}\setminus k}}^{\star H}, \forall k$,
where the diagonal matrix ${\bm{\Lambda}}_{P_{E_k}}=\text{diag}[\Lambda_{P_{E_k},1},
\cdots,\Lambda_{P_{E_k},N_U}]$  needs to be optimized. Further based on  \eqref{eq392} and \eqref{eq393}, we have
 \begin{align}\label{eq452}
&\big({\sigma_n^{2}}{\ln 2}\mu_k^{\star}\bm{I}_{N_U}\!-\!\bm{G}_{E_k}^H\bm{M}_{\mathcal{K}\setminus k}^{{\star}\frac{1}{2}}\big( \bm{I}_{N_B}\!+\!\sigma_n
^{\!-\!2}\bm{M}_{ \mathcal{K}\setminus k}^{{\star} \frac{1}{2}}\bm{G}_{E_k}{\bm{P}}_{E_k}^{\star}\nonumber\\
&\times
\bm{G}_{E_k}^H\bm{M}_{ \mathcal{K}\setminus k}^{{\star}\frac{1}{2}}
\big)^{\!-\!1} \bm{M}_{\mathcal{K}\setminus k}^{{\star} \frac{1}{2}}\bm{G}_{E_k}
 \big){\bm{P}}_{E_k}^{\star}\!=\!\bm{0}
\end{align}
   Then by  substituting  the analytical structure of  ${\bm{P}}_{E_k}^{\star}$ into  \eqref{eq452},
  the optimal   $\Lambda_{P_{E_k},i}$ can be  derived  as \eqref{eq252}, i.e.,
\begin{align}\label{u1}
  \Lambda_{P_{E_k},i}\!\!=\!\bigg[\frac{1}{\ln 2\mu_k^{\star}}-
\frac{\sigma_n^2}
{\Lambda_{M_{\mathcal{K}\setminus k}^{\star},i}^2}\bigg]^+\!\!
\!,~\forall i,  \forall k.
\end{align}

Next, to   obtain the optimal $\mu_k^{\star},\forall k$, we firstly prove $\mu_k^{\star}\!>\!0,\forall k$  for the problem \eqref{eq21}.   Specifically, based on \eqref{eq392} and \eqref{eq393},   we have \begin{align}\label{u2}
{\sigma_n^{2}}{\ln 2}\bm{Z}_k^{\star}&={\sigma_n^{2}}{\ln 2}\mu_k^{\star}\bm{I}_{N_U}-\bm{G}_{E_k}^H
\bm{M}_{\mathcal{K}\setminus k}^{{\star}\frac{1}{2}}\big( \bm{I}_{N_B}\! \nonumber\\
&~~~+\!\sigma_n
^{\!-\!2}\bm{M}_{ \mathcal{K}\setminus k}^{{\star} \frac{1}{2}}\bm{G}_{E_k}{\bm{P}}_{E_k}^{\star}
\bm{G}_{E_k}^H\bm{M}_{ \mathcal{K}\setminus k}^{{\star}\frac{1}{2}}
\big)^{-1}\bm{M}_{\mathcal{K}\setminus k}^{{\star} \frac{1}{2}}\bm{G}_{E_k}\nonumber\\
&\preceq \sigma_n^{2}{\ln 2}\mu_k^{{\star}}\bm{I}_{N_U}
\end{align}
 It is clear that $\mu_k^{\star}\!\! >\!\!0,\forall k$ must hold since $\bm{Z}_k^{\star}\succeq\bm{0}$ .
Then based on the KKT condition \eqref{eq42},
we  have $(1\!\!-\!\!\tau_0)\text{tr}({\bm{P}}_{E_k}^{\star})\!= \!(1\!-\!{\tau}_0)\text{tr}(\bm{\Lambda}_{P_{E_k}})\!\!=\!\!{\tau}_0 \varepsilon_k\text{tr}
 (\bm{H}_{E_k}\bm{W}_B^{\star}\bm{H}_{E_k}^H)$. By  substituting  \eqref{u1} into this equation, the optimal
% \begin{align}
   $\mu_k^{\star}\!=\! \frac{N_U(1\!-
\!{ {\tau}_0})}{
\ln 2\left(\!{ {\tau}_0} \varepsilon_k
\text{tr}(\bm{H}_{E_k}\bm{W}_B^{\star}
\bm{H}_{E_k}^H)\!+\!\sum\limits_{i\!=\!1}^{N_U}
\frac{(1\!-\!{ {\tau}_0})\sigma_n^2}
{\Lambda_{M_{{\mathcal{K}\setminus k}},i}^2}\!\right)},\forall k,$
is obtained  as \eqref{eq253}, which finally completes the proof.
\section{ }
Based on the analysis in Section III. A, it is easy to obtain that the achievable downlink rate function ${f}_{R}\left(\bm{W}_B
\right)$ is bounded as $0 \le {f}_{R}\left(\bm{W}_B \right) \le R_I^{up}$.
Now given a downlink rate threshold $R_I$ with $R_I^{mi}\!<\! R_I\!<\!R_I^{up}$, we firstly  consider $\beta
 \!\!=\!\!0$  and prove
 ${f}_{R}\left(\bm{W}_B
\right)\vert_{\beta
 \!=\!0}\!\!<\!\!R_I$ by contradiction as follows. If  ${f}_{R}\left(\bm{W}_B
\right)\vert_{\beta
 \!=\!0}\!\ge\! R_I$ holds,  we readily  observe that
 the optimal $\bm{W}_B^{\star}$  is only  determined  by  the ERs  downlink channels in the form of \eqref{eq250}, and actually not influenced by the downlink rate constraint. However, as analyzed in Section III.~B, this case only happens when the downlink threshold $R_I$ is no more than $R_I^{mi}$, i.e.,  $R_I\le R_I^{mi}$, which clearly  contradicts  with the original assumption of $R_I^{mi}\!<\! R_I\!<\!R_I^{up}$.  Thus  it proves that ${f}_{R}\left(\bm{W}_B
\right)\vert_{\beta
 =0}<R_I$. On the other hand, when $\beta\!\rightarrow\!+\infty$,   we have $\frac{{\bm{H}}^{\star}
 }{\beta}\!\rightarrow \!\sigma_n^{-2}{\bm{h}_{I_R}}{\bm{h}_{I_R}
 }^H$ and  $\bm{u}_{{H}}\!\rightarrow \! \frac{\bm{h}_{I_R}}{\Vert\bm{h}_{I_R}\Vert}$ from Theorem~\ref{pop1}. Together with the bounded property of ${f}_{R}\left(\bm{W}_B
\right)$, it follows that ${f}_R(\bm{W}_B)\vert_{\beta
 =+\infty}
\!= \!{R_I^{up}}$. Then we can naturally conclude that  $0 \le {f}_{R}\left(\bm{W}_B
\right)\vert_{\beta
 =0}\!<R_I<\!{f}_{R}\left(\bm{W}_B
\right)\vert_{\beta
=+\infty}\!=\!{R_I^{up}}$.

%Then  according to the  KKT condition \eqref{eq41} in Appendix A,  when $\beta^{\star}\!>\!0$,  we  must have ${f}_{R}\left(\bm{W}_B^{\star}
%\right)\!=\!R_I$ i\iat the optimal $\beta^{\star}$.
Next, given $R_I^{mi}\!<\! R_I\!<\!R_I^{up}$, we  prove that  there  exists a unique $\beta^{\star}\!\in\! (0,\infty)$ satisfying ${f}_{R}\left(\bm{W}_B^{\star}
\right)\!\! =\! \!R_I$  by contradiction as follows. Since the problem \eqref{eq21} is strictly concave, there exists one unique and globally  optimal $\bm{W}_B^{\star}$. Assuming that there are two $\beta_1$ and $\beta_2$ $(\beta_1\! <\! \beta_2)$  simultaneously  realizing   the  optimal $\bm{W}_B^{\star}$, and
 denoting the corresponding composite matrices ${\bm{H}}^{\star}\succeq \bm{0}$ in Theorem~\ref{pop1} as ${\bm{H}}_1^{\star}\succeq \bm{0}$ and ${\bm{H}}_2^{\star}\succeq \bm{0}$,  respectively, then we readily have
  \begin{align}\label{ae1}
  {\bm{H}}_1^{\star}\! = \!{\bm{H}}_2^{\star}\!+\!\bm{H}_{I_R}, ~~\bm{H}_{I_R}\!=\!{(\beta_2\!\!-\!\!\beta_1)\bm{h}_{I_R}\bm{h}_{I_R}^H}
  .
  \end{align}
Recalling \eqref{eq250},  it is clear  that  for  obtaining the  unique $\bm{W}_B^{\star}\!=\!P_B\bm{u}_{{H}}\bm{u}_{{H}}^H$,   the dominant eigenspaces of  ${\bm{H}}_1^{\star}$ and ${\bm{H}}_2^{\star}$  should be identical.  Therefore, we can define the EVDs ${\bm{H}}_i^{\star}\!\!=\!\!
  \bm{U}_{H_i}\bm{\Lambda}_{H_i}\bm{U}_{H_i}^H$, where the unitary matrix is $\bm{U}_{H_i}\!\!=\!\![\bm{u}_H,\widetilde{\bm{U}}_{H_i}]$ and the diagonal matrix
  is $\bm{\Lambda}_{H_i}\!=\!\text{diag}[\lambda_{H_i,\max},
  \lambda_{H_i,2},\cdots, \lambda_{H_i,N_B}]\succeq \bm{0}$. Based on $\bm{u}_H^H\widetilde{\bm{U}}_{H_1}\!=\!\bm{0}$ and $\bm{u}_H^H\widetilde{\bm{U}}_{H_2}\!=\!\bm{0}$, we have $\widetilde{\bm{U}}_{H_1}\!=\!\widetilde{\bm{U}}_{H_2}
  \bm{Q}$, where $\bm{Q}\!\in\!\mathbb{C}^{(N_B\!-\!1) \times (N_B\!-\!1)}$ is an arbitrary unitary matrix.  Further by defining $\widetilde{\bm{\Lambda}}
  _{H_i}\!=\!\text{diag}[
  \lambda_{H_i,2},\cdots, \lambda_{H_i,N_B}],~i=1,2$, the equation  \eqref{ae1} can be  rewritten  as
    \begin{align}\label{ae2}
&{\bm{H}_{I_R}}={\bm{H}}_1^{\star}-{\bm{H}}_2^{\star}\\
&\!= \!\underbrace{(\lambda_{H_1,\max} \! \!- \! \!\lambda_{H_2,\max})}_{\bigtriangleup\lambda}\bm{u}_H\bm{u}_H^H \!+ \!\widetilde{\bm{U}}_{H_2}    \underbrace{(\bm{Q}\widetilde{\bm{\Lambda}}
  _{H_1}\bm{Q}^H \!\!- \!\!\widetilde{\bm{\Lambda}}
  _{H_2})}_{\widetilde{\bm{Q}}
  _{H}}\widetilde{\bm{U}}_{H_2}^H\nonumber
  \end{align}
Since  $\bm{u}_H^H\widetilde{\bm{U}}_{H_2}\!=\!\bm{0}$  denoting the orthogonal space, to guarantee the  rank-1 positive semidefinite ${\bm{H}_{I_R}}$, only the following two cases are possible:
 \subsubsection{case1: $\bigtriangleup\lambda\!\neq\! 0$ and $\widetilde{\bm{U}}_{H_2}\widetilde{\bm{Q}}
  _{H}^{\frac{1}{2}}\!\!=\!\!\bm{0}$}
 In this case,  we readily  have  ${\bm{H}_{I_R}}={\bigtriangleup\lambda}\bm{u}_H\bm{u}_H^H $ from   \eqref{ae2} and
 thus the  optimal downlink beamforming $\bm{W}_B^{\star}\!=\!P_B\bm{u}_{{H}}\bm{u}_{{H}}^H\!=\!
 P_B\bm{h}_{I_R}\bm{h}_{I_R}/\Vert\bm{h}_{I_R}\Vert^2$ is derived, which implies that  ${f}_{R}\left(\bm{W}_B^{\star}
\right)\!=\!R_I^{up}\! > \! R_I$.
 \subsubsection{case2: $\bigtriangleup\lambda\!=\!0$ and $\text{rank}(\widetilde{\bm{U}}_{H_2}\widetilde{\bm{Q}}
  _{H}^{\frac{1}{2}})\!=\!1$}
 In this case, ${\bm{H}_{I_R}}\!=\!\widetilde{\bm{U}}_{H_2}    (\bm{Q}\widetilde{\bm{\Lambda}}
  _{H_1}\bm{Q}^H \!\!- \!\!\widetilde{\bm{\Lambda}}
  _{H_2})\widetilde{\bm{U}}_{H_2}^H$  is observed from \eqref{ae2} and thus we have
  $\bm{u}_{{H}}^H{\bm{H}_{I_R}}\bm{u}_{{H}}=0$, which implies that   ${f}_{R}\left(\bm{W}_B^{\star}
\right)\!=0\! < \! R_I$.

It is obvious that
 both the two cases contradict  with the original  equation of ${f}_{R}\left(\bm{W}_B^{\star}
\right)\!=\! R_I$.
 As a result, we   conclude that there is only one  optimal $\beta^{\star}\!\in\! (0,\infty)$ satisfying ${f}_{R}\left(\bm{W}_B^{\star}
\right)\!=\! R_I$.
Meanwhile, ${f}_R(\bm{W}_B)$ is also  continuous  due to the  EVD operation on   positive semidefinite
matrices  and  $\log$ function. Overall, when $R_I^{mi}\!<\! $$R_I\!<\!R_I^{up}$, with the boundness, uniqueness and  continuity of ${f}_R(\bm{W}_B)$,  we easily  conclude that  ${f}_{R}\left(\bm{W}_B
\right)$ is  monotonic  w.r.t. $\beta\!\in\! (0,\infty)$. Moreover, since ${f}_{R}\left(\bm{W}_B
\right)\vert_{\beta
 =0}\!<\!{f}_{R}\left(\bm{W}_B
\right)\vert_{\beta
=+\infty}$, we can infer that ${f}_{R}\left(\bm{W}_B
\right)$ is   monotonically increasing
 w.r.t. $\beta\!\in\! (0,\infty)$ and converges to $R_I^{up}$. This completes the proof.
\section{ }
For the problem \eqref{eq27}, the KKT conditions are given by
 \begin{subequations}
\begin{align}
&\eqref{eq39}, \eqref{eq40},  \eqref{eq41} \\
&\frac{\sigma_n^{\!-\!2}}{\ln \!2}\bm{G}_{E_k}^H\!( \bm{I}_{N_B}\!\!\!+\!\!\frac{\sigma_n^{\!-\!2}}{\tau_{E_K}
}\bm{G}_{E_k}\!\widetilde{\bm{P}}_{E_k}^{\star}
\!\bm{G}_{E_k}^H
)^{\!-\!1}\!\bm{G}_{E_k}\!\!\!=\!\!\mu_k^{\star}\bm{I}_{N_U}
  \!\!-\!\!\bm{Z}_k^{\star},\label{eqt1}\\
  &\bm{Z}_k^{\star}\widetilde{\bm{P}}_{E_k}^{\star}=\bm{0},~
  \bm{Z}_k^{\star}\succeq\bm{0}, \forall k\label{eqt2}\\
&\mu_k^{\star}(\text{tr}(\widetilde{\bm{P}}_{E_k}^{\star})\! - \! {\tau}_0 \varepsilon_k\text{tr}
 (\bm{H}_{E_k}\bm{W}_B^{\star}\bm{H}_{E_k}^H))\! \!=\!\! 0, \forall k \label{eqt3}\\
 &\log\det(\bm{I}_{N_B}\!+\!  \frac{\sigma_n^{-2}}{\tau_{E_k}^{\star}} \bm{G}_{E_k}
   \widetilde{\bm{P}}_{E_k}^{\star}\bm{G}_{E_k}^H)-
   \frac{\sigma_n^{-2}}{\ln 2 \tau_{E_k}^{\star}}\text{tr}((\bm{I}_{N_B}\nonumber\\
   &~~~~\!+\!  \frac{\sigma_n^{-2}}{\tau_{E_k}^{\star}} \bm{G}_{E_k}
   \widetilde{\bm{P}}_{E_k}^{\star}\bm{G}_{E_k}^H)^{-1}\bm{G}_{E_k}
   \widetilde{\bm{P}}_{E_k}^{\star}\bm{G}_{E_k}^H)=0\label{eqt4}\\
 &\tau_{I_R}^{\star}(C_R
\! -\!\gamma^{\star})\!=\!0,~\forall k,~
 \tau_{I_R}^{\star}\!+\!\sum\limits_{k=1}^{K}
 \tau_{E_k}^{\star}\!=\!1-\tau_0.\label{eqt5}
\end{align}
\end{subequations}
\subsection{The optimal $\bm{W}_B^{\star}$, $\lambda^{\star}$ and $\beta^{\star}$ }
  Since the TDMA-enabled WPSNs have the same downlink transmission as the SDMA-enabled WPCNs,  it is clear that the $\{\bm{W}_B^{\star}, \lambda^{\star}, \beta^{\star}\}$-related KKT conditions of the problem \eqref{eq27} are exactly the same as that for the problem \eqref{eq21}, i.e., \eqref{eq39}, \eqref{eq40}, and \eqref{eq41}.  So following the same approach in Appendix  B. A,  we can derive the same optimal $\bm{W}_B^{\star}$ and  $\lambda^{\star},\beta^{\star}$ as in \eqref{u0} and \eqref{beta}, respectively,   to the problem \eqref{eq27}.
\subsection{The optimal $ {\bm{P}}_{E_k}^{\star}$ and $\mu_k^ {\star}, \forall k$ }

 The $\{ {\bm{P}}_{E_k}^{\star},\mu_k^ {\star}\}$-related KKT conditions (\eqref{eqt1},\eqref{eqt2}),\eqref{eqt3}) are similar to that of the MIMO capacity  maximization in \cite{RR2}. Therefore, the optimal $ \widetilde{\bm{P}}_{E_k}^{\star}, \forall k$  can be   derived as $
 \widetilde{\bm{P}}_{E_k}^{\star}\!\!=\!\!
\bm{V}_{G_{E_k}}{\bm{\Lambda}}_{P_{E_k}}\bm{V}_{G_{E_k}}^H, \forall k$,
 where  $\bm{V}_{G_{E_k}}$  comes  from the SVD $\bm{G}_{E_k}\!\!=\!\!\bm{U}_{G_{E_k}}\bm{\Lambda}_{G_{E_k}}
 \bm{V}_{G_{E_k}}^H$  and the  positive diagonal   ${\bm{\Lambda}}_{P_{E_k}}\!\!=\!\!\text{diag}[{\Lambda}
   _{P_{E_k},1}, $ $
\cdots,{\Lambda}_{P_{E_k},N_U}]$ needs to be determined. Similarly to that in  Appendix B. B, we can obtain    ${\Lambda}
   _{P_{E_k},i}\!=\![\frac{\tau_{E_k}^{\star}}{\ln2
\mu_k^{\star}}-
\frac{\sigma_n^2\tau_{E_k}^{\star}}{{\Lambda}_{G_{E_k}, i}^2}]^+,~ \forall i$,   by  substituting the analytical structure of $ \widetilde{\bm{P}}_{E_k}^{\star}, \forall k$  into the KKT conditions \eqref{eqt1}  and \eqref{eqt2}.  Further  based on   $\widetilde{\bm{P}}_{E_k}^{\star}
\!\!=\!\!\tau_{E_k}^{\star} \bm{P}_{E_k}^{\star}, \forall k$, we  readily have the optimal  ${\bm{P}}_{E_k}^{\star}\!\!\!=\!\!
\bm{V}_{G_{E_k}} \!\! \bm{\Lambda}_{P_{E_k}}
\!\!\bm{V}_{G_{E_k}}^H/{\tau_{E_k}^{\star}}$ as \eqref{eq2900} in Theorem~\ref{pop2}. Moreover,   we  also  have  $\mu_k^{\star}\! >\! 0, \forall k$ for the problem   \eqref{eq27}  by jointly considering \eqref{eqt1},
 $\bm{Z}_k^{\star}\succeq\bm{0}$ and $\bm{G}_{E_k}^H\!( \bm{I}_{N_B}\!\!\!+\!\!\frac{\sigma_n^{\!-\!2}}{\tau_{E_K}
}\bm{G}_{E_k}\!\widetilde{\bm{P}}_{E_k}^{\star}
\!\bm{G}_{E_k}^H
)^{\!-\!1}\!\bm{G}_{E_k}\succeq\bm{0}$. Then  based on the  KKT condition  \eqref{eqt3}, $\text{tr}(\widetilde{\bm{P}}_{E_k}^{\star})\!=\!\text{tr}({\bm{\Lambda}}_{P_{E_k}})\!=\! {\tau}_0 \varepsilon_k\text{tr}
 (\bm{H}_{E_k}\bm{W}_B^{\star}\bm{H}_{E_k}^H)$ holds. Hence, the optimal
   $\mu_k^{\star}= \frac{N_U\tau_{E_k}^{\star}}{
\ln 2\left({ {\tau}_0} \varepsilon_k
\text{tr}(\bm{H}_{E_k}\bm{W}_B^{\star}
\bm{H}_{E_k}^H)+\sum\limits_{i=1}^{N_U} \frac{\sigma_n^2\tau_{E_k}^{\star}}
{\Lambda_{G_{E_k}\!,i}^2}\right)}, \forall k,$
 is derived  as \eqref{eq291}.
\subsection{The optimal $\bm{\tau}_{up}^{\star}$ and $\gamma^{\star}$  }
  By substituting  the above  analytical structure of  $\widetilde{\bm{P}}_{E_k}^{\star}$ into the $\{\tau_{E_k}^{\star}, \forall k\}$-related  KKT condition \eqref{eqt4}, we have
\begin{align}\label{eqt8}
  {g}_T(\tau_{E_k}^{\star})&\!=\!\sum\limits_{
i\!=\!1}^{N_U} \bigg(\!\log\big(1\!+\!
 \!\frac{
 \sigma_n^{\!-\!2}\overline{{\Lambda}}_{G_{E_k},i}}{
 \tau_{E_k}^{\star}}\big)\nonumber\\
 &-\frac{ \sigma_n^{\!-\!2}\overline{{\Lambda}}_{G_{E_k},i}}{
 \ln2(\tau_{E_k}^{\star}\!+\!
 \sigma_n^{\!-\!2}\overline{{\Lambda}}_{G_{E_k},i})}\bigg)
\!=\!\gamma^{\star},~\forall k
\end{align} where  $\overline{{\Lambda}}_{G_{E_k},i}\!\!=\!\!
{{\Lambda}}_{G_{E_k},i}^2\Lambda_{P_{E_k},i}, \forall i$. Since the problem \eqref{eq27} is strictly concave, the Lagrangian function in \eqref{eq28} should be upper bounded for  the optimization variables {$ \{\bm{\tau}_{up},
\widetilde{\bm{W}}_B,
 \widetilde{\bm{P}}_{E_k},\forall k\}$} in ${\mathcal{A}_T}$. Now with respect to the non-negative variable $\tau_{I_R}$, the corresponding term in \eqref{eq28} should be upper bounded, which implies that $C_R\!-\!\gamma^{\star}\!\le\!0$. Then together with \eqref{eqt8} and the KKT condition \eqref{eqt5}, only the following two cases are possible for deriving the optimal $\bm{\tau}_{up}^{\star}$ and $\gamma^{\star}$.
 \subsubsection{case1: $C_R\!-\!\gamma^{\star}\!\!<\!\!0$}
It directly follows from the KKT condition \eqref{eqt5} that $\tau_{I_R}^{\star}\!\!=\!\!0$  and $\sum\limits_{k\!=\!1}^K \!\tau_{E_k}^{\star}\!=\!\!1\!-\!{\tau}_0$. Then the optimal $\tau_{E_k}^{\star}=({\tau}_{E_k})^+,\forall k$ and $\gamma^{\star}$ can be jointly derived from the equations
$g_T({\tau}_{E_k})
 \!\!=\!\!{\gamma}^{\star}, \forall k$ {and} $\sum\limits_{k=1}^{K}
 ({\tau}_{E_k})^+
 \!=\!1\!-\!{\tau}_0$,   as shown in \eqref{eq2916} for the case of $\gamma^{\star}> C_R$.
\subsubsection{case2: $C_R\!-\!\gamma^{\star}\!\!=\!\!0$}
In this case, we have $\gamma^{\star}\!=\!C_R$. Then referring to \eqref{eqt8}, the optimal $\tau_{E_k}^{\star}=({\tau}_{E_k})^{+},\forall k$ can be  obtained from $ {g}_T({\tau}_{E_k})
 \!\!=\!\!C_R, \forall k$. Moreover, the optimal timing allocation for the IR is determined as $
{\tau}_{I_R}^{\star}\!=\!({\tau}_{I_R})^+ $ and $
{\tau}_{I_R}\!=\!1\!-\!{\tau}_0\!-
\!\sum\limits_{k\!=\!1}^{K}({\tau}_{E_k})^{+}$, which finally completes the proof.

\end{appendices}


\begin{thebibliography}{99}

\bibitem{R2}
A. Al Fuqaha, M. Guizani, et. al., \newblock ``Internet of Things:
 a survey on enabling technologies, protocols, and
 applications,'' \newblock {\em IEEE Commun. Surveys  Tut.}
 , vol. 17, no. 4, pp. 2347--2376, Jun. 2015.


\bibitem{R0}
J. Santos, J. J. P. C. Rodrigues, et al., \newblock ``Intelligent personal assistants based on Internet
of Things approaches,'' \newblock{\em IEEE Syst. J.},  pp. 1--10, 2016.





\bibitem{R1}
P. Ramezani and A. Jamalipour,  \newblock ``Toward the evolution of wireless powered communication networks for the future Internet of Things,'' \newblock {\em IEEE Network}, vol. 31, no. 6, pp. 62--69, Dec. 2017.


\bibitem{WSN0}
J. Liu, K. Xiong, P. Fan and Z. Zhong, \newblock ``Resource allocation in wireless powered sensor networks with circuit energy consumption constraints,''\newblock {\em IEEE Access}, vol. 5, pp. 22775--22782, 2017.


\bibitem{WSN1}
K. W. Choi, P. A. Rosyady, L. Ginting, A. A. Aziz, D. Setiawan and D. I. Kim, \newblock ``Theory and experiment for wireless-powered sensor networks: how to keep sensors alive,'' \newblock {\em  IEEE Trans. Wireless Commun.}, vol. 17, no. 1, pp. 430--444, Jan. 2018.

\bibitem{WSN2}
K. W. Choi, L. Ginting, P. A. Rosyady, A. A. Aziz and D. I. Kim, \newblock ``Wireless-powered sensor networks: how to realize,''  \newblock {\em IEEE Trans. Wireless Commun.}, vol. 16, no. 1, pp. 221--234, Jan. 2017.

\bibitem{WSN3}
Z. Chu, F. Zhou, Z. Zhu, R. Q. Hu and P. Xiao, \newblock ``Wireless powered sensor networks for Internet of Things: maximum throughput and optimal power allocation,'' \newblock {\em  IEEE Internet  Things}, vol. 5, no. 1, pp. 310--321, Feb. 2018.

\bibitem{WSN4}
M. Song and M. Zheng, \newblock ``Energy efficiency optimization for wireless powered sensor networks with nonorthogonal multiple access,'' \newblock {\em IEEE Sensors Lett.}, vol. 2, no. 1, pp. 1--4, Mar. 2018.

\bibitem{R4} % 1
 A. M. Zungeru, L. M. Ang, S. Prabaharan, and K. P. Seng, \newblock ``Radio
frequency energy harvesting and management for wireless sensor networks,'' \newblock {\em
in Green Mobile Devices and Networks}, FL, USA: CRC Press, 2012,
pp. 341--368.

\bibitem{RR4} % 1
S. He et al., \newblock ``Energy provisioning in wireless rechargeable sensor networks,'' \newblock{ \em
IEEE Trans. Mobile Comput.}, vol. 12, no. 10, pp. 1931--1942,
Oct. 2013.



\bibitem{R40}
X. Lu, P. Wang, D. Niyato, D. I. Kim and Z. Han, \newblock ``Wireless charging technologies: fundamentals, standards, and network applications,'' \newblock{ \em IEEE Commun. Surveys  Tuts.}, vol. 18, no. 2, pp. 1413--1452, Nov. 2015.

\bibitem{R41}
R. Zhang and C. K. Ho, \newblock ``MIMO broadcasting for simultaneous wireless information and power transfer,'' \newblock{\em  IEEE Trans. Wireless Commun.}, vol. 12, no. 5, pp. 1989--2001, May 2013.


\bibitem{R5}
S. Bi, Y. Zeng and R. Zhang,\newblock``Wireless powered communication networks: an overview,'' \newblock{\em IEEE Wireless Commun.}, vol. 23, no. 2, pp. 10-18, Apr. 2016.

\bibitem{R6}
H. Ju and R. Zhang,\newblock``Optimal resource allocation in full-duplex wireless-powered communication network,'' \newblock
{\em IEEE Trans. Commun.}, vol. 62, no. 10, pp. 3528--3540, Oct. 2014.


\bibitem{RWPCN2}
S. Lee and R. Zhang, \newblock``Cognitive wireless powered network: spectrum sharing models and throughput maximization,'' \newblock{\em IEEE Trans. Cognitive Commun. and Net.}, vol. 1, no. 3, pp. 335--346, Sept. 2015.




\bibitem{RWPCN3}
L. Liu, R. Zhang, and K. C. Chua, \newblock``Multi-antenna wireless powered
communication with energy beamforming,'' \newblock{\em IEEE Trans. Commun.},
vol. 62, no. 12, pp. 4349--4361, Dec. 2014.


\bibitem{RWPCN30}
J. Kim, H. Lee, C. Song, T. Oh and I. Lee, \newblock ``Sum throughput maximization for multi-user MIMO cognitive wireless powered communication networks,'' \newblock{\em IEEE Trans. Wireless Commun.}, vol. 16, no. 2, pp. 913--923, Feb. 2017.


\bibitem{RWPCN00}
G. Yang, C. K. Ho, R. Zhang, and Y. L. Guan, \newblock ``Throughput optimization
for massive MIMO systems powered by wireless energy
transfer,'' \newblock {\em IEEE J. Sel. Areas Commun.}, vol. 33, no. 8, pp. 1640--1650,
Aug. 2015.


\bibitem{RWPCN0}
H. Ju and R. Zhang, \newblock``Throughput maximization in wireless powered communication
networks,'' \newblock{\em  IEEE Trans. Wireless Commun.}, vol. 13, no. 1,
pp. 418-428, Jan. 2014.
%




\bibitem{RWPCN1}
 Y. L. Che, L. Duan, and R. Zhang, \newblock``Spatial throughput maximization
of wireless powered communication networks,'' \newblock {\em  IEEE J. Sel. Areas
Commun.}, vol. 33, no. 8, pp. 1534--1548, Aug. 2015.





\bibitem{RWPCN5}
Q. Sun, G. Zhu, C. Shen, X. Li and Z. Zhong, \newblock``Joint beamforming design and time allocation for  wireless powered communication networks,'' \newblock{\em IEEE Commun.
 Lett.}, vol. 18, no. 10, pp. 1783--1786, Oct. 2014.


\bibitem{RWPSN1}
Y. Gao, W. Cheng, H. Zhang and Z. Li,  \newblock``Optimal resource allocation with heterogeneous QoS provisioning for wireless powered sensor networks,'' in \newblock{\em Proc. IEEE  GLOBECOM,  } Singapore, 2017, pp. 1--6.



\bibitem{RWPCN6}
M. U. Kim and H. J. Yang,  \newblock ``Min-SINR maximization with DL SWIPT and UL WPSN in multi-antenna interference networks,'' \newblock {\em IEEE  Wireless Commun. Lett.}, vol. 6, no. 3, pp. 318--321, Jun. 2017.

\bibitem{succ}
L. Zhou and W. Yu, \newblock``Uplink multicell processing with
limited backhaul via per-base-station successive interference cancellation,'' \newblock{\em IEEE J. Sel. Areas Commun.}, vol. 31, no. 10, pp. 1981--1993, Oct. 2013.

\bibitem{RR2}
R. Zhang, Y. C. Liang, and S. Cui,\newblock``Dynamic resource allocation in
cognitive radio networks,'' \newblock{\em IEEE Signal
Process. Mag.}, vol. 27, no. 3,
pp. 102--114, May 2010.



\bibitem{Sp}
S. Boyd and L. Vandenberghe, {\em Convex Optimization}. Cambridge, U.K.:
Cambridge Univ. Press, 2004.

\bibitem{water}
W. Yu, W. Rhee, S. Boyd, and J. M. Cioffi, \newblock ``Iterative water-filling for
Gaussian vector multiple-access channels,''
\newblock {\em  IEEE Trans. Inf. Theory},
vol. 50, no. 1, pp. 145--152, Jan. 2004.


\bibitem{IBS}
V. Nguyen, H. D. Tuan, H. H. Nguyen and N. N. Tran, \newblock ``Optimal superimposed training design for spatially correlated fading MIMO channels,''
\newblock {\em IEEE Trans. Wireless Commun.}, vol. 7, no. 8, pp. 3206--3217, Aug. 2008.



\bibitem{golden}
J. Kiefer, \newblock ``Sequential minimax search for a maximum,'' in
\newblock {\em Proc. Amer.
Math. Soc.,} vol. 4, no. 3, pp. 502--506, 1953.





\bibitem{complexity}
A.~Ben-Tal and A.~Nemirovski,\newblock``Lectures on modern convex optimization: analysis, algorithms, and engineering applications,''
\newblock {\em  ser. MPSSIAM
series on optimization}, Philadelphia, PA, USA: SIAM, 2001.


\end{thebibliography}
\end{document}